\documentclass[aps,prresearch,reprint,superscriptaddress]{revtex4-2}
\usepackage{amsmath,amsfonts,amssymb}
\usepackage{mathtools}
\usepackage{multirow}
\usepackage{ulem}
\usepackage{algorithmic}
\usepackage[ruled, vlined, linesnumbered]{algorithm2e}

\usepackage[utf8]{inputenc}

\newcommand{\ud}[1]{\textcolor{black}{#1}}
\usepackage[T1]{fontenc}

\usepackage{bm}
\usepackage{times}

\usepackage{mathtools}
\usepackage{amsmath}
\usepackage[shortlabels]{enumitem}

\usepackage{graphicx,epic,eepic,epsfig,amsmath,latexsym,amssymb,verbatim,color}
 
\usepackage{amsfonts}       
\usepackage{nicefrac}       

\usepackage{amsmath}
\usepackage{bbm}

\usepackage{float}
\usepackage{tikz}
\usetikzlibrary{chains}
\usetikzlibrary{fit}
\usetikzlibrary{arrows}

\usepackage{epsfig}
\usetikzlibrary{shapes.symbols,patterns} 
\usepackage{pgfplots}
\pgfplotsset{compat=1.18}

\usepackage[strict]{changepage}
\usepackage{hyperref}
\hypersetup{colorlinks=true,citecolor=blue,linkcolor=blue,filecolor=blue,urlcolor=blue,breaklinks=true}

\usepackage[marginal]{footmisc}
\usepackage{url}
\usepackage{theorem}

\newtheorem{proposition}{Proposition}
\newtheorem{lemma}[proposition]{Lemma}


\def\squareforqed{\hbox{\rlap{$\sqcap$}$\sqcup$}}
\def\qed{\ifmmode\squareforqed\else{\unskip\nobreak\hfil
\penalty50\hskip1em\null\nobreak\hfil\squareforqed
\parfillskip=0pt\finalhyphendemerits=0\endgraf}\fi}
\def\endenv{\ifmmode\;\else{\unskip\nobreak\hfil
\penalty50\hskip1em\null\nobreak\hfil\;
\parfillskip=0pt\finalhyphendemerits=0\endgraf}\fi}
\newenvironment{proof}{\noindent \textbf{{Proof~} }}{\hfill $\blacksquare$}

\newcounter{remark}

\newcounter{example}

\mathchardef\ordinarycolon\mathcode`\:
\mathcode`\:=\string"8000
\def\vcentcolon{\mathrel{\mathop\ordinarycolon}}
\begingroup \catcode`\:=\active
  \lowercase{\endgroup
  \let :\vcentcolon
  }

\usepackage{cleveref}
\usepackage{graphicx}
\usepackage{xcolor}

\RequirePackage[framemethod=default]{mdframed}
\newmdenv[skipabove=7pt,
skipbelow=7pt,
backgroundcolor=darkblue!15,
innerleftmargin=5pt,
innerrightmargin=5pt,
innertopmargin=5pt,
leftmargin=0cm,
rightmargin=0cm,
innerbottommargin=5pt,
linewidth=1pt]{tBox}

\newmdenv[skipabove=7pt,
skipbelow=7pt,
backgroundcolor=blue2!25,
innerleftmargin=5pt,
innerrightmargin=5pt,
innertopmargin=5pt,
leftmargin=0cm,
rightmargin=0cm,
innerbottommargin=5pt,
linewidth=1pt]{dBox}
\newmdenv[skipabove=7pt,
skipbelow=7pt,
backgroundcolor=darkkblue!15,
innerleftmargin=5pt,
innerrightmargin=5pt,
innertopmargin=5pt,
leftmargin=0cm,
rightmargin=0cm,
innerbottommargin=5pt,
linewidth=1pt]{sBox}
\definecolor{darkblue}{RGB}{0,76,156}
\definecolor{darkkblue}{RGB}{0,0,153}
\definecolor{blue2}{RGB}{102,178,255}
\definecolor{darkred}{RGB}{195,0,0}

\newcommand{\nc}{\newcommand}
\nc{\rnc}{\renewcommand}
\nc{\lbar}[1]{\overline{#1}}
\nc{\bra}[1]{\langle#1|}
\nc{\ket}[1]{|#1\rangle}
\nc{\ketbra}[2]{|#1\rangle\!\langle#2|}
\nc{\braket}[2]{\langle#1|#2\rangle}

\nc{\proj}[1]{| #1\rangle\!\langle #1 |}
\nc{\avg}[1]{\langle#1\rangle}
\nc{\rank}{\operatorname{Rank}}
\nc{\smfrac}[2]{\mbox{$\frac{#1}{#2}$}}
\nc{\tr}{\operatorname{Tr}}
\nc{\ox}{\otimes}
\nc{\dg}{\dagger}
\nc{\dn}{\downarrow}
\nc{\cA}{{\cal A}}
\nc{\cB}{{\cal B}}
\nc{\cC}{{\cal C}}
\nc{\cD}{{\cal D}}
\nc{\cE}{{\cal E}}
\nc{\cF}{{\cal F}}
\nc{\cG}{{\cal G}}
\nc{\cH}{{\cal H}}
\nc{\cI}{{\cal I}}
\nc{\cJ}{{\cal J}}
\nc{\cK}{{\cal K}}
\nc{\cL}{{\cal L}}
\nc{\cM}{{\cal M}}
\nc{\cN}{{\cal N}}
\nc{\cO}{{\cal O}}
\nc{\cP}{{\cal P}}
\nc{\cQ}{{\cal Q}}
\nc{\cR}{{\cal R}}
\nc{\cS}{{\cal S}}
\nc{\cT}{{\cal T}}
\nc{\cU}{{\cal U}}
\nc{\cV}{{\cal V}}
\nc{\cX}{{\cal X}}
\nc{\cY}{{\cal Y}}
\nc{\cZ}{{\cal Z}}
\nc{\cW}{{\cal W}}
\nc{\csupp}{{\operatorname{csupp}}}
\nc{\qsupp}{{\operatorname{qsupp}}}
\nc{\var}{{\operatorname{var}}}
\nc{\rar}{\rightarrow}
\nc{\lrar}{\longrightarrow}
\nc{\polylog}{{\operatorname{polylog}}}
\nc{\wt}{{\operatorname{wt}}}
\nc{\av}[1]{{\left\langle {#1} \right\rangle}}
\nc{\supp}{{\operatorname{supp}}}

\nc{\argmin}{{\operatorname{argmin}}}

\def\x{\xi}

\nc{\RR}{{{\mathbb R}}}
\nc{\CC}{{{\mathbb C}}}
\nc{\FF}{{{\mathbb F}}}
\nc{\NN}{{{\mathbb N}}}
\nc{\ZZ}{{{\mathbb Z}}}
\nc{\PP}{{{\mathbb P}}}
\nc{\QQ}{{{\mathbb Q}}}
\nc{\UU}{{{\mathbb U}}}
\nc{\EE}{{{\mathbb E}}}
\nc{\id}{{\operatorname{id}}}

\nc{\CHSH}{{\operatorname{CHSH}}}

\nc{\be}{\begin{equation}}
\nc{\ee}{{\end{equation}}}
\nc{\bea}{\begin{eqnarray}}
\nc{\eea}{\end{eqnarray}}
\nc{\<}{\langle}
\rnc{\>}{\rangle}
\nc{\rU}{\mbox{U}}

\nc{\ob}[1]{#1}

\nc{\SEP}{{\text{\rm SEP}}}
\nc{\NS}{{\text{\rm NS}}}
\nc{\LOCC}{{\text{\rm LOCC}}}
\nc{\PPT}{{\text{\rm PPT}}}
\nc{\EXT}{{\text{\rm EXT}}}
\nc{\Sym}{{\operatorname{Sym}}}


\nc{\ERLO}{{E_{\text{r,LO}}}}
\nc{\ERLOCC}{{E_{\text{r,LOCC}}}}
\nc{\ERPPT}{{E_{\text{r,PPT}}}}
\nc{\ERLOCCinfty}{{E^{\infty}_{\text{r,LOCC}}}}
\nc{\Aram}{{\operatorname{\sf A}}}

\usepackage{tikz}
\usepackage{hyperref}
\hypersetup{colorlinks=true,citecolor=blue,linkcolor=blue,filecolor=blue,urlcolor=blue,breaklinks=true}

\makeatletter
\def\grd@save@target#1{%
  \def\grd@target{#1}}
\def\grd@save@start#1{%
  \def\grd@start{#1}}
\tikzset{
  grid with coordinates/.style={
    to path={%
      \pgfextra{%
        \edef\grd@@target{(\tikztotarget)}%
        \tikz@scan@one@point\grd@save@target\grd@@target\relax
        \edef\grd@@start{(\tikztostart)}%
        \tikz@scan@one@point\grd@save@start\grd@@start\relax
        \draw[minor help lines,magenta] (\tikztostart) grid (\tikztotarget);
        \draw[major help lines] (\tikztostart) grid (\tikztotarget);
        \grd@start
        \pgfmathsetmacro{\grd@xa}{\the\pgf@x/1cm}
        \pgfmathsetmacro{\grd@ya}{\the\pgf@y/1cm}
        \grd@target
        \pgfmathsetmacro{\grd@xb}{\the\pgf@x/1cm}
        \pgfmathsetmacro{\grd@yb}{\the\pgf@y/1cm}
        \pgfmathsetmacro{\grd@xc}{\grd@xa + \pgfkeysvalueof{/tikz/grid with coordinates/major step}}
        \pgfmathsetmacro{\grd@yc}{\grd@ya + \pgfkeysvalueof{/tikz/grid with coordinates/major step}}
        \foreach \x in {\grd@xa,\grd@xc,...,\grd@xb}
        \node[anchor=north] at (\x,\grd@ya) {\pgfmathprintnumber{\x}};
        \foreach \y in {\grd@ya,\grd@yc,...,\grd@yb}
        \node[anchor=east] at (\grd@xa,\y) {\pgfmathprintnumber{\y}};
      }
    }
  },
  minor help lines/.style={
    help lines,
    step=\pgfkeysvalueof{/tikz/grid with coordinates/minor step}
  },
  major help lines/.style={
    help lines,
    line width=\pgfkeysvalueof{/tikz/grid with coordinates/major line width},
    step=\pgfkeysvalueof{/tikz/grid with coordinates/major step}
  },
  grid with coordinates/.cd,
  minor step/.initial=.2,
  major step/.initial=1,
  major line width/.initial=2pt,
}
\makeatother

\usepackage{thmtools}
\usepackage{thm-restate}
\usepackage{etoolbox}
\makeatletter
\def\problem@s{}
\newcounter{problems@cnt}

\newcommand{\allproblems}{\problem@s}
\makeatother

\definecolor{beamer}{rgb}{0.2,0.2,0.7}
\definecolor{colorone}{rgb}{1,0.36,0.03}
\definecolor{colortwo}{rgb}{0.4,0.77,0.17}
\definecolor{colorthree}{rgb}{0.01,0.51,0.93}
\definecolor{colorfour}{rgb}{0.47,0.26,0.58}
\definecolor{colorfive}{rgb}{0.12,0.55,0.16}
\usepackage{tcolorbox}
\usepackage{relsize}
\usepackage{graphicx}
\usepackage{booktabs}
\usepackage{amsmath}
\usepackage{tikz}
\usetikzlibrary{quantikz}
\nc{\st}{\text{subject to} \ }
\nc{\supre}{\text{supremum} \ }
\nc{\sdp}{\text{sdp}}
\usepackage{array}

\allowdisplaybreaks

\nc{\ith}[1]{{#1}^\mathrm{th}}

\begin{document}
\title{Dynamic LOCC Circuits for Automated Entanglement Manipulation}

\author{Xia Liu}
\affiliation{Thrust of Artificial Intelligence, Information Hub, The Hong Kong University of Science and Technology (Guangzhou), Guangdong 511453, China}

\author{Jiayi Zhao}
\affiliation{Thrust of Artificial Intelligence, Information Hub, The Hong Kong University of Science and Technology (Guangzhou), Guangdong 511453, China}

\author{Benchi Zhao}
\email{benchizhao@gmail.com}
\affiliation{QICI Quantum Information and Computation Initiative, Department of Computer Science, The University of Hong Kong, Pokfulam Road, Hong Kong}

\author{Xin Wang}
\email{felixxinwang@hkust-gz.edu.cn}
\affiliation{Thrust of Artificial Intelligence, Information Hub, The Hong Kong University of Science and Technology (Guangzhou), Guangdong 511453, China}

\begin{abstract}
Due to the limited qubit number of quantum devices, distributed quantum computing is considered a promising pathway to overcome this constraint. In this paradigm, multiple quantum processors are interconnected to form a cohesive computational network, and the most natural set of free operations is local operations and classical communication (LOCC). However, designing a practical LOCC protocol for a particular task has been a tough problem. In this work, we propose a general and flexible framework called dynamic LOCCNet (DLOCCNet) to simulate and design LOCC protocols. We demonstrate its effectiveness in two key applications: entanglement distillation and distributed state discrimination. The protocols designed by DLOCCNet, in contrast to conventional ones, can solve larger-sized problems with reduced training time, making the framework a practical and scalable tool for current quantum devices. This work advances our understanding of the capabilities and limitations of LOCC while providing a powerful methodology for protocol design.

\end{abstract}

\maketitle

\section{Introduction}
Quantum computing holds transformative potential over classical computing by exploiting quantum superposition and entanglement, enabling speedups in solving complex problems such as optimization~\cite{farhi2014quantum,harrigan2021quantum}, cryptography~\cite{bennett2014quantum, ekert1991quantum}, and quantum simulations~\cite{georgescu2014quantum, daley2022practical, buluta2009quantum}. However, current quantum processors remain constrained by limited qubit counts. To overcome these limitations, distributed quantum computing~\cite{caleffi2024distributed,peng2020simulating,mitarai2021constructing,piveteau2023circuit} has emerged as a promising pathway, where multiple quantum processors are interconnected to form a cohesive computational quantum network. 

In the paradigm of distributed quantum computing, local operations and classical communication (LOCC) operations~\cite{chitambar2014everything} are the free operations. To connect different parties and realize computational tasks, shared Bell states are necessary~\cite{bennett1993teleporting, bouwmeester1997experimental,huang2004experimental,chen2024optimum}. Since quantum systems are vulnerable and easily corrupted by noise, distributing high-fidelity Bell states is very expensive, and we must assume the distributed entanglement to be limited. This constraint strongly motivates us to design LOCC protocols that can utilize distributed entanglement in the most efficient way.


Although the basic idea of LOCC is relatively easy to grasp, its mathematical structure is highly complicated and difficult to characterize~\cite{chitambar2014everything}, implying that we can hardly design the optimal LOCC protocol to realize our tasks. In previous work, Zhao et al.~\cite{zhao2021practical} proposed an optimization framework called LOCCNet to simulate and design LOCC protocols via optimizing parameterized LOCC protocols. With this framework, they found the state-of-the-art LOCC protocol to purify Bell states from mixed states and isotropic states. However, such a framework requires increasingly high training costs as the size of the quantum system increases, and thus it is mainly limited to small-size regimes of entanglement manipulation. Training large-size LOCCNet may encounter barren plateaus~\cite{mcclean2018barren,Cerezo2023does,Zhang2024,zhang2020toward,Liu2024mitigating,zhang2024predicting}, which refers to the phenomenon where the gradient drops exponentially with respect to the size of the quantum systems. How to automatically design efficient LOCC protocols for large-scale entanglement manipulation remains a key challenge at the center of distributed quantum information processing.


\begin{figure}[t]
    \centering
    \includegraphics[width=\linewidth]{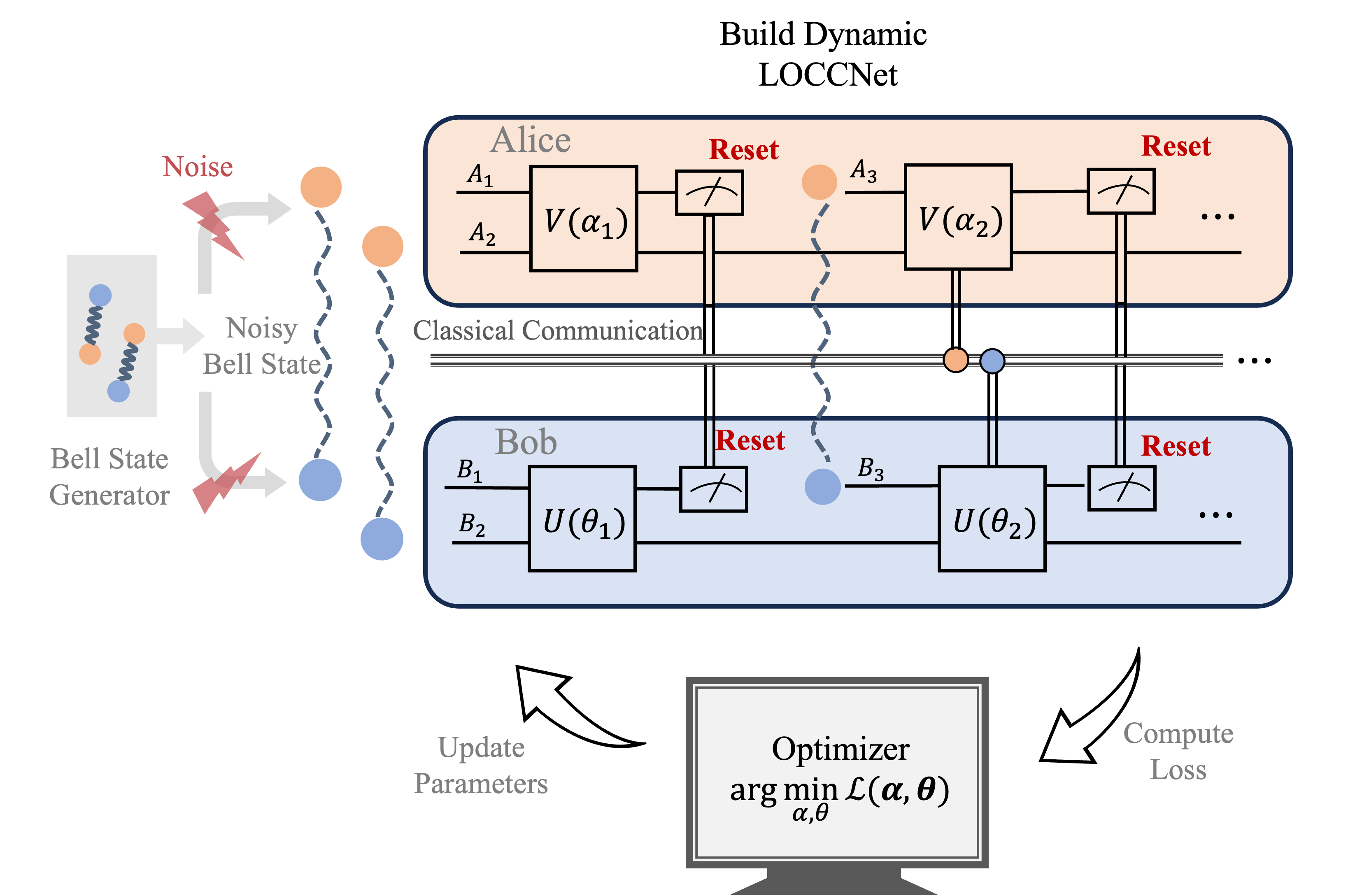}
    \caption{\textbf{Illustration of the procedure for optimizing an LOCC protocol with dynamic LOCCNet.} For simplicity, this diagram involves only two parties: Alice and Bob. Each local operation is encoded as the parameterized quantum circuits (PQCs), while the double line represents the classical communication channel between the parties. Following each measurement, the parties can reset their qubits and introduce fresh entangled states for subsequent rounds of PQC evolution. After the final round, a task-specific loss function, $\mathcal{L}(\mathbf{\alpha,\mathbf{\theta}})$, guides the training process. The divergent parameters across different PQCs indicate distinct possible measurement outcomes. Finally, optimization methods iteratively update the parameters, $\alpha_i,\theta_i$, in each local operation, yielding the optimized dynamic LOCC protocol.}
    \label{fig:dyn-loccnet}
\end{figure}

In this work, we propose a more general and flexible framework called dynamic LOCCNet (DLOCCNet), as shown in Fig.~\ref{fig:dyn-loccnet}, then apply this framework to design protocols for entanglement distillation~\cite{bennett1996purification, deutsch1996quantum,murao1998multiparticle,dur2007entanglement,pan2003experimental,devetak2005distillation,Wang2019b,Leditzky2017,Fang2017, jansen2022enumerating,Wang2016c} and distributed state discrimination~\cite{barnett2009quantum, bae2015quantum,zhu2025entanglement}. 
\ud{Specifically, in the task of entanglement distillation, we consider three kinds of noisy states: Bell states affected by erasure, depolarizing, and amplitude channels, respectively. We design a scalable entanglement distillation protocol for the Bell states affected by the erasure channel, which outperforms the conventional protocol. For the other two channels, the protocol designed within our framework could achieve fidelity higher than previous protocols.} \ud{Due to the exponential training cost required by LOCCNet, it is practically limited to designing entanglement distillation protocols for only a small number of noisy state copies. In contrast, our DLOCCNet framework overcomes this scalability limitation, enabling efficient optimization across arbitrary copy numbers.} In the task of distributed state discrimination, we study the average success probability of distinguishing two quantum states using different copy numbers. The result  demonstrates a substantial improvement in success probability as the number of state copies increases. Compared with previous work~\cite{zhao2021practical}, the newly proposed DLOCCNet framework offers several key advantages: $^{(1)}$ Our framework could solve problems involving more systems than the conventional one. $^{(2)}$ The training time of DLOCCNet is much shorter than the previous framework. Based on these advantages, our framework offers a more powerful and efficient method to design practical LOCC protocols for distributed quantum computing.

\section{Main Results}

\subsection{Framework of dynamic LOCCNet}

Here, we propose a more general and flexible framework called DLOCCNet, whose schematic is illustrated in Fig.~\ref{fig:dyn-loccnet}. \ud{In LOCC, it is known that the choice of local operation depends on the previous measurement results. This framework can not only simulate the implementation of quantum circuits, but also simulate the effects of previous measurements.}
Specifically, two remote labs, namely Alice and Bob, initially share an entangled state. They then apply local parametrized quantum circuits (PQC) $V(\alpha_1)$ and $U(\theta_1)$ to their respective systems. Afterwards, they perform measurements on parts of their systems, obtaining outcomes $m_A^{(1)}$ and $m_B^{(1)}$, which are communicated with each other through a classical channel. Next, Alice and Bob reset the measured system with a fresh shared entangled state and simulate PQCs $V(\alpha_2)$ and $U(\theta_2)$ on their own systems, which are determined by the previous measurement results $m_A^{(1)}, m_B^{(1)}$. This process of measurement, communication, reset, and simulation of PQCs is repeated iteratively. In the end, the loss function can be computed. A classical optimizer is used to minimize the cost function, update the PQC parameters, and restart the circuit. Through iterating the processes, the LOCC protocol is considered optimized when the loss function converges to its minimum value. \ud{In the end, the framework outputs a practical distributed dynamic protocol with specific quantum circuits, and the corresponding conditioning.}

In the following sections, we are going to apply the DLOCCNet to the tasks of entanglement distillation and state discrimination, and we also compare the results achieved by the conventional framework LOCCNet~\cite{zhao2021practical}. Note that the numerical simulations are conducted on QuAIRKit~\cite{quairkit}.

\subsection{Entanglement distillation}

\begin{figure*}[t]
    \centering
    \includegraphics[width=0.8\linewidth]{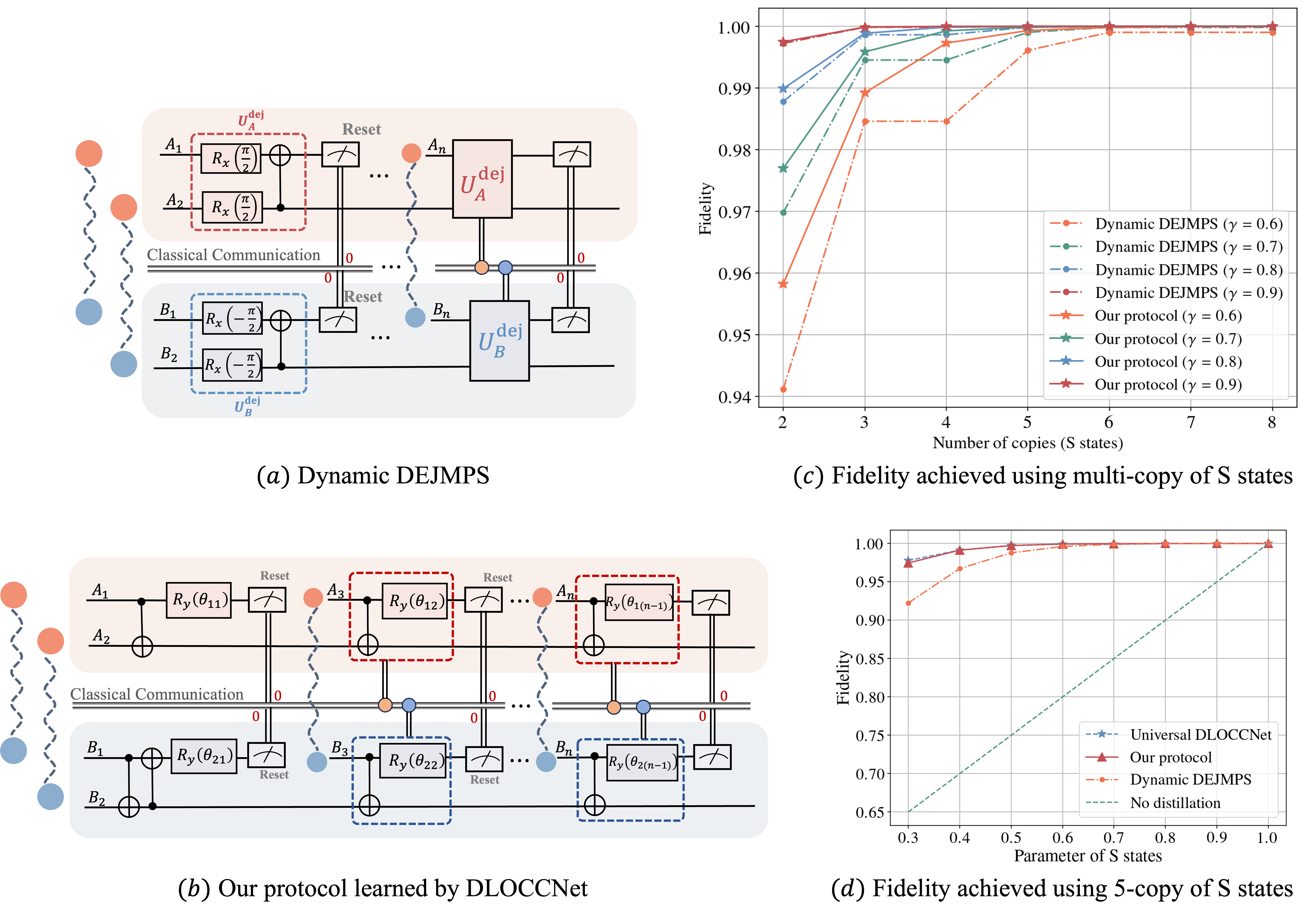}
    \caption{\textbf{Fidelity achieved by distillation methods for maximally entangled states affected by bilocal erasure channel.} $(a)$ is the dynamic DEJMPS for $n$ copies of an S state, where $U_A^{\rm dej}, U_B^{\rm dej}$ compose the DEJMPS protocol. $(b)$ shows the simplified circuit learned by dynamic LOCCNet, where the rotation angles of $R_y$ gates are $\theta_{11}=\theta_{21}=\arccos{(1-\gamma)+\pi}$, and $\theta_{ij}=-\frac{\pi}{2}$ for $i\in\{1,2\}, j\in\{2,\cdots,n-1\}$. $(c)$ is the fidelity achieved using two kinds of S state distillation protocol, where the horizontal axis represents the number of copies, and the lines of the same color represent the fidelity achieved under the same noise parameters. $(d)$ shows the fidelity achieved with the noise parameter using $5$-copy of S states.}
    \label{fig:s-n}
\end{figure*}

Given the scalability constraints of existing quantum hardware, distributed quantum computing offers a more practical alternative to solve the large-scale problem. In this paradigm, only LOCC operations are allowed, which limits the power of quantum computing. To achieve universal computation on distributed quantum computers, presharing maximally entangled states, generally Bell state $\ket{\Phi^+} = 1/\sqrt{2} (\ket{00}+\ket{11})$, is considered a promising method. While perfectly preshared Bell states enable accurate quantum operations, realistic scenarios inevitably involve noise corruption during qubit transmission and storage. This degradation significantly impacts the reliability of distributed quantum protocols.

To address this fundamental challenge, entanglement distillation protocols~\cite{bennett1996purification,deutsch1996quantum, jansen2022enumerating} were developed to generate high-fidelity Bell states by consuming multiple copies of noisy entangled states. These protocols can be applied recursively, consuming additional noisy copies to achieve progressively higher fidelity. Besides, these distillation protocols can not only provide high-fidelity Bell states, but also enhance the state estimation~\cite{casapao2024disti}

In the following, we apply our framework, DLOCCNet, to the design of entanglement distillation protocols. Specifically, for the distillation of $n$ copies of the bipartite noisy Bell state $\rho_{AB}=\cN(\proj{\Phi^+})$, where $\cN$ is the noisy channel, we construct $2k$ parameterized quantum circuits (PQCs). These consist of $k$ circuits for Alice's operations and $k$ for Bob's, with an integer number $k\in[1,n-1]$ determined by the classical communication rounds. Upon successful completion, these PQCs are designed to output a state with enhanced fidelity relative to the Bell state.
The optimization process is formulated by a cost function that minimizes infidelity $\mathcal{L} = 1-F$, where fidelity $F=\langle\Phi^+|\rho_{AB}|\Phi^+\rangle$ quantifies the overlap between the output state and the target Bell state $|\Phi^+\rangle$. We study the distillation protocols for the Bell states corrupted by different noisy models, including the erasure channel $\cN_{\rm era}$, the depolarizing channel $\cN_{\rm dep}$, and the amplitude damping channel $\cN_{\rm ad}$.

The erasure channel~\cite{grassl1997codes,bennett1997capacities} is a fundamental noise model in quantum information theory that characterizes the probabilistic loss or erasure of a quantum state during transmission. It is mathematically described by
\begin{align}
    \cN_{\rm era}(\rho)=\gamma\rho+(1-\gamma)\ketbra{0}{0},
\end{align}
where $1-\gamma$ is the erasure probability, and $\ket{0}$ represents an erasure symbol that indicates the loss of the original quantum information. 
Consider a third-party Bell state generator that prepares and transmits the bell state to Alice and Bob through a bilocal noisy channel, as illustrated in Fig.~\ref{fig:dyn-loccnet}. If the noisy channel is characterized as an erasure channel, the resulting noise state is referred to as the S state,
\begin{equation}
    \rho_{\rm s}=\gamma\ketbra{\Phi^+}{\Phi^+}+(1-\gamma)\ketbra{00}{00}.
\end{equation}
A celebrated distillation protocol, DEJMPS~\cite{deutsch1996quantum}, performs on two copies of some S state and outputs a state whose fidelity to $\ket{\Phi^+}$ is $\frac{(1+p)^2}{2(1+p^2)}$. Now we present a protocol learned by DLOCCNet that can output a state that achieves a fidelity higher than that of dynamic DEJMPS and close to the highest possible fidelity. The DLOCCNet for distillation involves Alice and Bob performing local operations on their own qubits independently, using two copies of some S states in the initial round. After their local measurements, they communicate their outcomes through classical channel. The measured qubits will be reset when both Alice and Bob get $0$ from computational basis measurements, followed by adding an additional one-copy of some S states in these qubits. Building on this framework, we present a simplified circuit that achieves performances comparable to universal DLOCCNet, which means each local operation is the universal unitary. Furthermore, we analytically derive the fidelity expression for the distilled state.
Details of our protocol are given in Fig.~\ref{fig:s-n} $(b)$. The final fidelity achieved by this protocol is compared with that achieved by the dynamic DEJMPS protocol in Fig.~\ref{fig:s-n} $(c)$ and $(d)$. Fig.~\ref{fig:s-n} $(d)$ shows that the protocol learned by DLOCCNet performs better than dynamic DEJMPS when using $5$-copy of some S states. Considering the general case, i.e., $n$-copy of some S states, Fig.~\ref{fig:s-n} $(c)$ implies that the protocol learned by dynamic LOCCNet achieves a higher fidelity than that of dynamic DEJMPS for different noisy parameters of S state. In this $n$-copy case, the dynamic DEJMPS's fidelity to the ebit is given by 
$$
f_n^{\rm ddejmps}=
\begin{cases}
    \frac{f_1f_{n-1}}{1-f_1+(2f_1-1)f_{n-1}}, &n=3k-1,\\
    \frac{f_1f_{n-1}}{f_{n-1}+f_1-1},&n=3k,\\
    \frac{f_{n-1}}{2f_{n-1}-1},&n=3k+1,
\end{cases}
$$
for $k=1,2,\cdots$, where $f_1=\frac{1+\gamma}{2}$. While 
\begin{equation}
    f_n^{\rm dloccnet}=\frac{f_1f_{n-1}}{1-f_1+(2f_1-1)f_{n-1}}, n\geq 3,
\end{equation}
is fidelity achieved by the protocol learned by dynamic LOCCNet using $n$-copy of S state, where $f_1=\frac{1+\gamma}{2}$, and 
\begin{equation}
    f_2^{\rm dloccnet}=\frac{1}{2}\left(1+\sqrt{-\gamma(\gamma-2)}\right).
\end{equation}
\ud{The above formulas offer a systematic method for computing the fidelity of the distilled states after each round, and the detailed derivation can be found in the Supplementary Material~\ref{app:s}}.

\begin{figure*}[t]
    \centering
    \includegraphics[width=\linewidth]{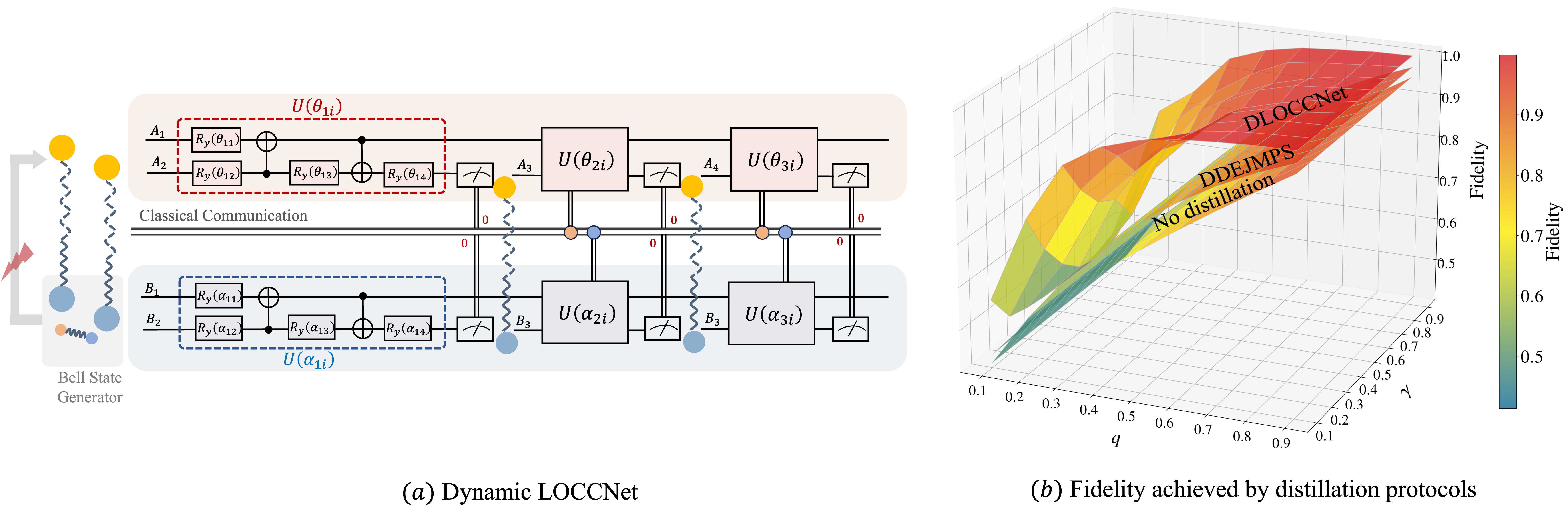}
    \caption{\textbf{Fidelity achieved by distillation methods for maximally entangled states affected by unilocal GAD channel.} $(a)$ is the simplified circuit learned by dynamic LOCCNet for $4$ copies of noisy states, where $U(\bm \theta)$ and $ U(\bm \alpha)$ are composed by $R_y$ rotation and CNOT gates. $(b)$ shows the fidelity achieved using two types of distillation protocols.}
    \label{fig:gad}
\end{figure*}
Beyond the erasure channel, we investigate distillation performance under amplitude damping (AD) and generalized amplitude damping (GAD) channels~\cite{fletcher2008channel,fujiwara2004estimation}, which model energy dissipation processes such as spontaneous emission. The AD channel transforms a state $\rho$ according to
\begin{equation} 
    \cN_{\rm ad}(\rho) = E_0\,\rho\,E_0^\dagger + E_1\,\rho\,E_1^\dagger,\label{eq:ad}
\end{equation}
where the Kraus operators are defined as $E_0 = \ketbra{0}{0} + \sqrt{1-\gamma}\ketbra{1}{1}$, $E_1 = \sqrt{\gamma}\,|0\rangle\langle 1|$, with $\gamma\in [0,1]$ representing the dissipation rate. The GAD channel provides a more comprehensive model, incorporating thermal effects through:
\begin{equation}
    \cN_{\rm gad}(\rho)=\sum_{i=0}^3 E_i^{\rm g}\rho (E_i^{\rm g})^{\dagger},
\end{equation}
with Kraus operators
\begin{align}
   & E_0^{\rm g}=\sqrt{q}\begin{pmatrix}
    1&0\\
    0& \sqrt{\gamma}
    \end{pmatrix},&
    E_1^{\rm g}=\sqrt{q}\begin{pmatrix}
    0&\sqrt{1-\gamma}\\
    0& 0
    \end{pmatrix},\\\nonumber
    \vspace{0.7pt}
    &E_2^{\rm g}=\sqrt{1-q}\begin{pmatrix}
    \sqrt{\gamma}&0\\
    0& 1
    \end{pmatrix},&
    E_3^{\rm g}=\sqrt{1-q}\begin{pmatrix}
    0&0\\
    \sqrt{1-\gamma}& 0
    \end{pmatrix},
\end{align}
where $q\in[0,1]$ represents the temperature of the environment.

We conduct two complementary studies to evaluate the DLOCCNet. First, we compare dynamic DEJMPS and DLOCCNet using 2-to-1 distillation for 4 copies of Bell states under unilocal GAD channel. In this context, a unilocal GAD channel means that after Bob generates the Bell state with the generator, the half of state is transmitted to Alice through a single-qubit GAD channel, as shown in Fig.~\ref{fig:gad}. Second, we examine the $n$-copy scenario under bilocal AD channel to assess scalability.

The Bell state $|\Phi^{+}\rangle$ shared between Alice and Bob undergoes a GAD channel action only on Alice's part, obtaining the mix state.
\begin{equation}
\sigma_{\rm gad}^{\rm uni}=\cN_{\rm gad}\otimes\cI\bigl(|\Phi^{+}\rangle\langle\Phi^{+}|\bigr).
\end{equation}
In this context, the DLOCCNet for distillation proceeds that Alice and Bob first apply local operations $U(\bm \theta_{1\bm i}),U(\bm \alpha_{1\bm i})$ to their respective two-qubit subsystems. After measuring their second qubits and communicating the results classically, they perform conditional reset operations when both obtain $0$ measurement outcomes. Subsequently, they introduce an additional GAD-corrupted Bell state and execute the next round of local operations $U(\bm \theta_{2\bm i}),U(\bm \alpha_{2\bm i})$. Our DLOCCNet implementation employs simplified local unitaries, $U(\bm \theta_{\bm j \bm i}),U(\bm \alpha_{\bm j\bm i})$, consisting of single-qubit $R_y(\bm \theta_{\bm j\bm i}),R_y(\bm \alpha_{\bm j\bm i})$ rotations and two-qubit CNOT gates, as illustrated in Fig.~\ref{fig:gad} $(a)$. The resulting fidelity comparison with dynamic DEJMPS is presented in Fig.~\ref{fig:gad} $(b)$, demonstrating superior performance of our method for $4$-copy distillation.

For the $n$-copy case under the bilocal AD channel, each qubit of the Bell state $|\Phi^{+}\rangle$ experiences AD channel $\cN_{\rm ad}$ is
\begin{equation}
\sigma_{\rm ad}^{bi}=\cN_{\rm ad}\otimes\cN_{\rm ad}\bigl(|\Phi^{+}\rangle\langle\Phi^{+}|\bigr).
\end{equation}

Fig.~\ref{fig:exp-ad} presents the fidelity results, clearly demonstrating that DLOCCNet consistently outperforms dynamic DEJMPS across various damping parameters $\gamma$, achieving superior distillation fidelity in all tested configurations.

\begin{figure}[h]
    \centering
    \includegraphics[width=0.8\linewidth]{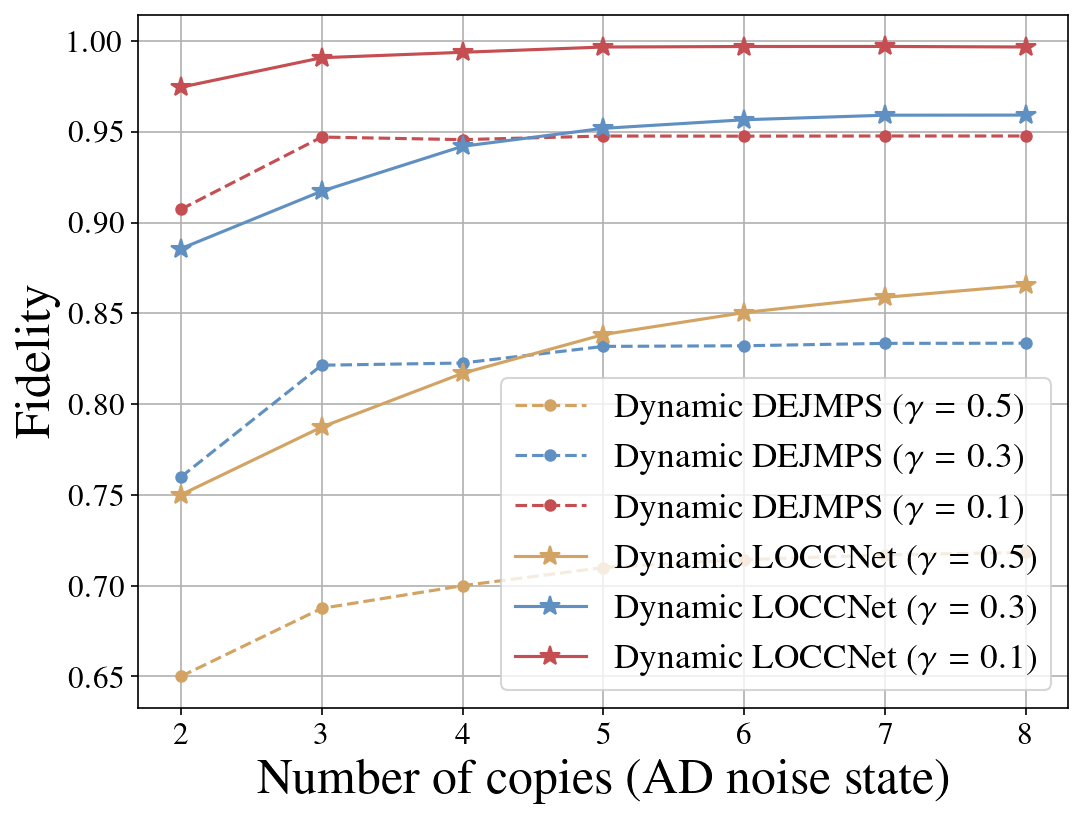}
    \caption{\textbf{Fidelity achieved by distillation methods for various of copies of maximally entangled states affected by bilocal AD channel.}}
    \label{fig:exp-ad}
\end{figure}

\begin{figure*}[t]
    \centering
    \includegraphics[width=0.8\linewidth]{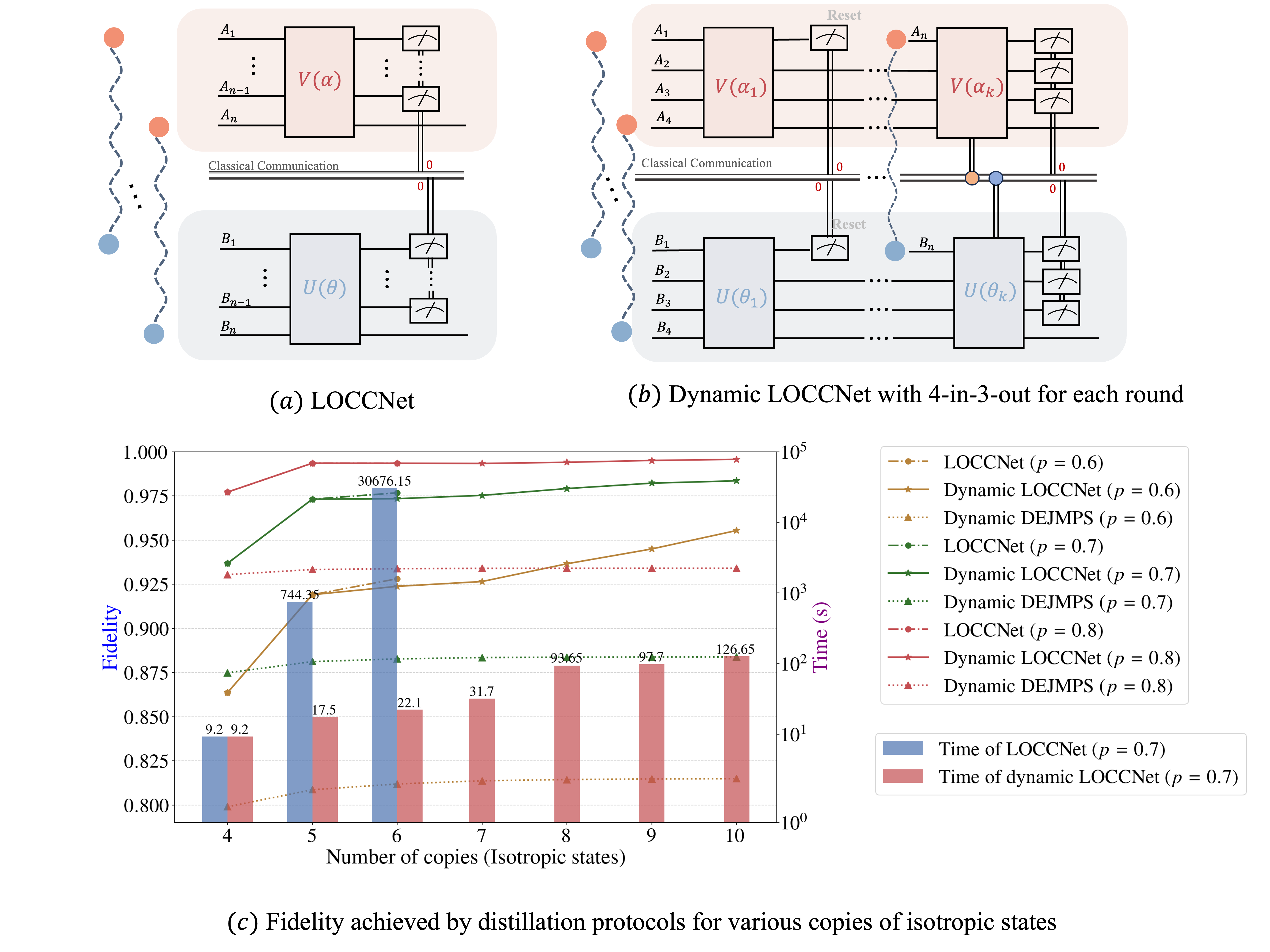}
    \caption{\textbf{Fidelity achieved by distillation methods for maximally entangled states affected by bilocal depolarizing channel.} $(a)$ is the distillation protocol learned by LOCCNet. $(b)$ shows the isotropic state distillation via dynamic LOCCNet method using $n$ copies. $(c)$ demonstrates the fidelity achieved by distillation protocols for various copies of isotropic states.}
    \label{fig:iso-n}
\end{figure*}

The depolarizing quantum channel~\cite{king2003capacity} is another fundamental type of quantum noise that describes a process in which a quantum state loses its coherence and becomes maximally mixed with a certain probability. For a $d$-dimensional quantum system, the depolarizing channel $\cN_{\rm dep}$ acting on a state $\rho$ is given by
\begin{align}\label{eq:dep}
    \cN_{\rm dep}(\rho)=p\rho+(1-p)\frac{I}{d},
\end{align}
where $\frac{I}{d}$ is the maximally mixed state in dimension $d$. In the special case of a two-qubit system, the isotropic states can be viewed as the result of applying the bilocal depolarizing channel to the ebit $\ket{\Phi^+}$. The state is given by
\begin{align}
\rho_{\rm iso}=p\ketbra{\Phi^+}{\Phi^+}+(1-p)\frac{I}{4},
\end{align}
with $p\in[0,1]$, which illustrates how depolarizing noise affects entanglement and quantum correlations.

The distillation protocols obtained by LOCCNet for four copies of some isotropic state have been well studied in~\cite{zhao2021practical}, achieving empirically optimal fidelity for the four-copy case. Although using additional copies of isotropic states can yield a state with enhanced fidelity, designing the corresponding protocol with LOCCNet requires exponential time. To address this challenge, we extend the distillation to the $n$-copy regime using DLOCCNet. This approach not only achieves fidelity comparable to that of the LOCCNet in significantly less time (as shown in Fig.~\ref{fig:iso-n}), but is also suitable for the small-scale hardware.
Additionally, we also explore the $16\rightarrow 1$ distillation using the iterative method. This method is widely recognized as a highly effective technique among distillation strategies, demonstrating superior performance compared to alternative methods when evaluated with the same number of copies.
Remarkably, DLOCCNet demonstrates similar performance to the iterative method (as shown in Fig.~\ref{fig:iso-16}), highlighting its effectiveness in practical applications.

\begin{figure*}[t]
    \centering
    \includegraphics[width=\linewidth]{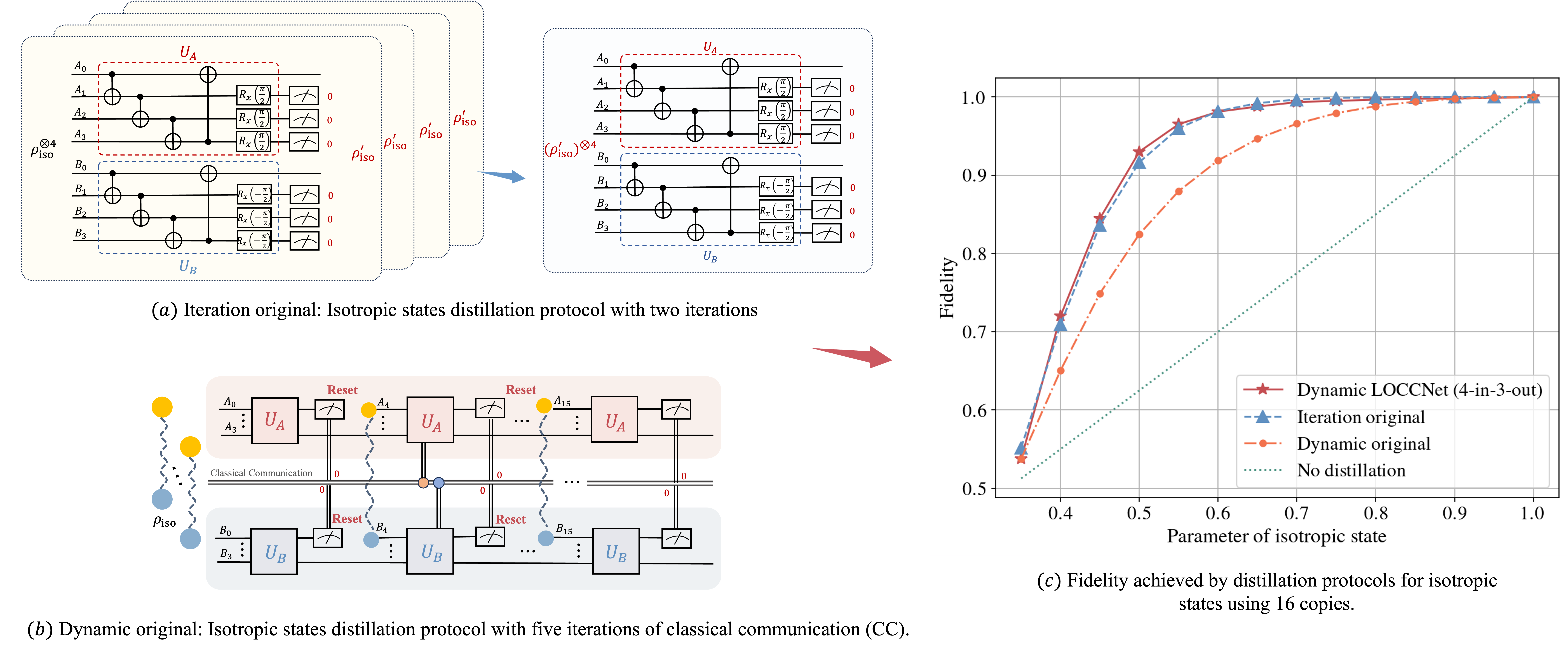}
    \caption{\textbf{Isotropic states distillation using different kinds of method.} $(a)$ Iteration original protocol shows two iteration of original LOCCNet protocol with fixed parameters. $(b)$ Dynamic original protocol uses dynamic method with five iterations of classical communication (CC), where $U_A, U_B$ are circuits shown in $(a)$. $(c)$ Fidelity achieved by distillation protocols for isotropic states using $16$ copies. The blue dashed line shows the fidelity between isotropic state with Bell state. The red solid line depicts the performance of the dynamic LOCCNet shown in Fig.~\ref{fig:iso-n} $(b)$, and the blue dash line is the result of iterative method in $(a)$. The orange line shows the fidelity achieved corresponding to the protocol in $(b)$.}
    \label{fig:iso-16}
\end{figure*}

The final fidelity achieved by DLOCCNet, using various copies of the isotropic state, is compared to that of LOCCNet in Fig.~\ref{fig:iso-n}. The optimal fidelity attained by LOCCNet is around 0.975 by consuming 6 copies of isotropic states with $p=0.7$. Training such a $6\rightarrow1$ distillation protocol takes $3\times 10^4$ seconds on a personal computer. Due to the exponential increase in training time, it is basically impossible to train a $7\rightarrow1$ entanglement distillation protocol.
In contrast, DLOCCNet easily surpasses this limitation, achieving fidelity exceeding 0.98 by consuming additional copies of noisy states. Our implementation of DLOCCNet employs a measurement scheme with a 4-in-3-out configuration at each round: for four copies of the input noisy state, only one qubit per system is measured in each round. If both measurement results are zero, the remaining qubits are passed on to the subsequent round. In the final round, three qubits per system are measured, and id all measurement outcomes are zero, the remaining distilled state is obtained (as shown in Fig.~\ref{fig:iso-n} $(b)$). With this ansatz, the training time for the $10\rightarrow1$ protocol only takes 126 seconds, which is a tremendous improvement compared to LOCCNet. \ud{Note that DLOCCNet solves large-scale problems by recursively training smaller PQCs. Consequently, its training complexity increases polynomially, rather than exponentially as constrained by previous methods.}

Considering the same number of copies, we explore the $16\rightarrow 1$ distillation process. In addition to the comparison with dynamic technique, we also evaluate the performance of the iterative method (as shown in Fig.~\ref{fig:iso-16} $(a)$), which is regarded as the state-of-the-art in distillation~\cite{chen2024optimum}. For a given fixed distillation protocol (here we use the the well-performing protocol proposed by Zhao et al.~\cite{zhao2021practical}), the iterative method can achieve higher fidelity compared to the dynamic method, because the cumulative effect of the fidelity improvement over consecutive rounds of the dynamic method increases the fidelity gap between the input states in each round. Compared with DLOCCNet shown in Fig.~\ref{fig:iso-n} $(b)$, the fidelity achieved by three different distillation methods for isotropic states using 16 copies is illustrated in Fig.~\ref{fig:iso-16} $(c)$. The results indicate that the distillation performance of the DLOCCNet with 4-in-3 out, is comparable to that of the iterative method depicted in Fig.~\ref{fig:iso-16} $(a)$. The fidelity of the iterative method to the ebit is given by
\begin{equation}
     f_i^{\rm itr}=\frac{1-4f_{i-1}+6f_{i-1}^2}{3-8f_{i-1}+8f_{i-1}^2},\; i\geq 1,
\end{equation}
where $f_{i-1}$ is the fidelity after $i-1$ iterations and $f_0=\frac{1+3p}{4}$. For the case of dynamic original in Fig.~\ref{fig:iso-16} $(b)$, the fidelity of the output state after $i$ iterations, denoted as $f_{n}^{\rm dyn}$, is given by the recurrence relation
\begin{align}
  f_i^{\rm dyn}=\frac{1-2f_0+f_0^2-3f_0f_{i-1}+12f_0^2f_{i-1}}{3-3f_0+f_{i-1}-8f_0f_{i-1}+16f_0^2f_{i-1}},\; i\geq 2,
\end{align}
where $f_{i-1}$ is the fidelity after $i-1$ iterations and $f_1=\frac{1-2p+9p^2}{4-8p+12p^2}$, $f_0=\frac{1+3p}{4}$. 
The derivation of the above two fidelities can be seen in the Supplementary Material~\ref{app:iso}.

The formulas presented above establish that the iterative distillation of isotropic states, using either the specified protocol for the dynamic or iterative methods, provides a systematic approach for calculating the fidelity of the distilled states after each round. It is important to note that the number of copies of the initial isotropic state is $4^i$, with $N_{i}^{\rm ori}=\frac{4^i-1}{3}$ times protocol using Fig.~\ref{fig:iso-16} $(a)$. In contrast, referring to Fig.~\ref{fig:iso-16} $(b)$, the number of copies of the initial isotropic state is $3i+1$, resulting in $N_{i}^{\rm dyn}=i$ iterations of the dynamic protocol. Notably, the iterative method requires the input copies to be the exponentiation of four, whereas the dynamic architecture removes this constraint, supporting arbitrary input sizes. This flexibility allows resource-efficient distillation that can be tailored to experimental or computational needs.

Furthermore, we also extend our analysis to the distillation of maximally entangled states in higher-dimensional system ($d>2$) that are affected by depolarizing channel. Detailed experimental result is provided in the Supplementary Material~\ref{app:other-distill-discri}.

\subsection{Distributed state discrimination}
Quantum state discrimination (QSD)~\cite{bae2015quantum, barnett2009quantum}, distinguishing one physical configuration from another, is a fundamental task in quantum information theory, which underpins the security of quantum cryptography~\cite{gisin2002quantum, bennett2014quantum}, the reliability of quantum communication~\cite{gisin2007quantum} and hypothesis testing~\cite{hayashi2006study}. QSD captures the boundary between what is knowable and unknowable about a quantum system, making it central to applications, including quantum data hiding~\cite{divincenzo2002quantum} and dimension witness~\cite{gallego2010device}.

Distributed state discrimination has been well studied in prior works~\cite{zhu2025entanglement,walgate2000local, hayashi2006bounds, duan2007distinguishing, li2017indistinguishability}. In particular, Walgate et al.~\cite{walgate2000local} demonstrated that when multiple parties share a single copy of a quantum state prepared in one of two orthogonal pure states, perfect discrimination can be achieved using LOCC operations. However, how to design a practical, implementable quantum circuit for distinguishing states on a quantum device remains a significant challenge.

Here, we apply our DLOCCNet framework to distinguish quantum states and investigate how the probability of success improves with respect to the increase in copies of states. We utilize multiple copies of the states without increasing the circuit width, as shown in Fig.~\ref{fig:discrimination}. Maintaining the same bit count while improving performance offers a more accessible route for near-term quantum devices and broadens the potential for practical applications.

\begin{figure}[t]
    \centering
    \includegraphics[width=\linewidth]{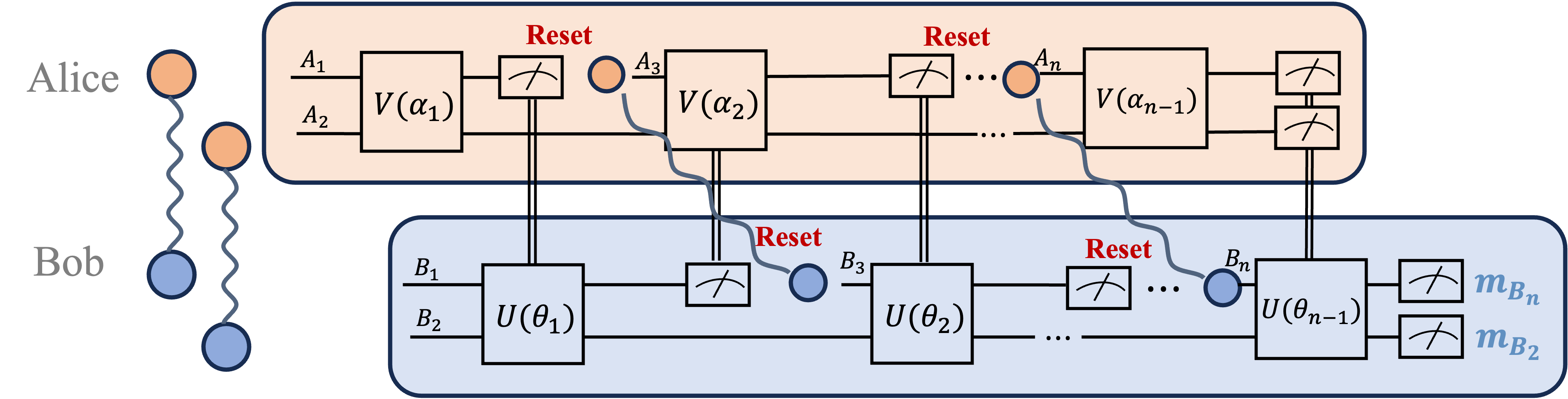}
    \caption{\textbf{Illustration of discrimination using dynamic LOCCNet.} With preshared states, Alice sends her measurement outcome to Bob in the first round, which determines Bob's local operation. Then Bob sends his measurement outcome to Alice. In the final round, after using the $n$-th copy of the preshared state, Bob infers the state based on the measurement outcome $m_{B_2}$ from his second register.}
    \label{fig:discrimination}
\end{figure}

Specifically, we focus on distinguishing between two states: the Bell state $\Phi_0=\proj{\Phi^+}$ and a noisy maximally entangled state $\Phi_1=\cN\ox\cN(\proj{\Phi^-})$ with $\ket{\Phi^-}=\frac{1}{\sqrt{2}}(\ket{00}-\ket{11})$. Alice and Bob share $n$ copies of $\Phi_0$ or $\Phi_1$, the goal is to distinguish whether the shared state is $\Phi_0$ or $\Phi_1$ through LOCC.

To develop an LOCC protocol for distinguishing a given state between $\Phi_0$ and $\Phi_1$, we utilize the structure shown in Fig.~\ref{fig:discrimination}, which consumes $n$ copies of states. With preshared states, Alice first applies local PQC on her part and performs measurement on the first register. Denote the measurement result as $m_{A_1}\in\{0,1\}$, which is then communicated to Bob via a classical channel. Upon receiving $m_{A_1}$, Bob applies a unitary conditioned on Alice's measurement to his systems. Subsequently, Bob performs a measurement on his first register, obtaining outcome $m_{B_1}\in\{0,1\}$, which he sends back to Alice. Both parties then reset the first register and repeat the previous processes, including applying the circuit, making measurements, communicating, and resetting systems. When all $n$ copies are processed, Alice measures all her systems and transmits the results $m_{A_n}, m_{A_2}$ to Bob. Bob, after applying the corresponding unitaries based on Alice's classical information, measures all his registers as well and records the outcomes $m_{B_n}, m_{B_2}$. We define that when Bob's measurement outcome on the second register, $m_{B_2}$, equals $0$ or $1$, the preshared state is inferred to be $\Phi_0$ or $\Phi_1$, respectively. Since our aim is to maximize the success probability, or equivalently, minimize the failure probability, we can define the cost function as 
\begin{equation}
    L= P(1 \mid \Phi_0) + P(0 \mid \Phi_1).
\end{equation}
where $P(j\mid \Phi_k)$ is the probability that the final measurement yields outcome $j$ given input state $\Phi_k$.

Fig.~\ref{fig:discri} presents a comparative analysis of discrimination performance for $\Phi_0=\proj{\Phi^+}$ and $\Phi_1=\cN\ox\cN(\proj{\Phi^-})$ across two noise scenarios using varying copy numbers. The left and right subplot in Fig.~\ref{fig:discri} corresponds to $\cN$ to be the amplitude damping noise and dephasing noise, respectively. The amplitude damping noise has been defined in Eq.~\eqref{eq:ad}. The dephasing channel $\cN_{\rm de}$ refers to the process that destroys the phase coherence of a quantum system, which maps the state $\rho$ to 
\begin{equation}
    \cN_{\rm de}(\rho) = (1-p) \rho + p\sigma_Z\rho\sigma_Z,
\end{equation}
where $p$, $\sigma_Z$ refers to the noise level and Pauli-Z matrix, respectively. Besides, we also consider the noise to be a depolarizing channel $\cN_{\rm dep}$, which is defined in Eq.~\eqref{eq:dep} and the numerical results are shown in Supplementary Materials~\ref{app:other-distill-discri}.

The results demonstrate a substantial improvement in success probability as the number of state copies increases. Crucially, the comparison between different copy numbers implementations reveals that DLOCCNet can enhance discrimination success probability without requiring additional circuit width, effectively leveraging quantum parallelism within the existing system constraints. This finding highlights a key advantage of our approach: the ability to exploit multiple quantum copies for improved performance while maintaining practical implementability through constant circuit complexity.

\begin{figure}[t]
    \centering
    \includegraphics[width=\linewidth]{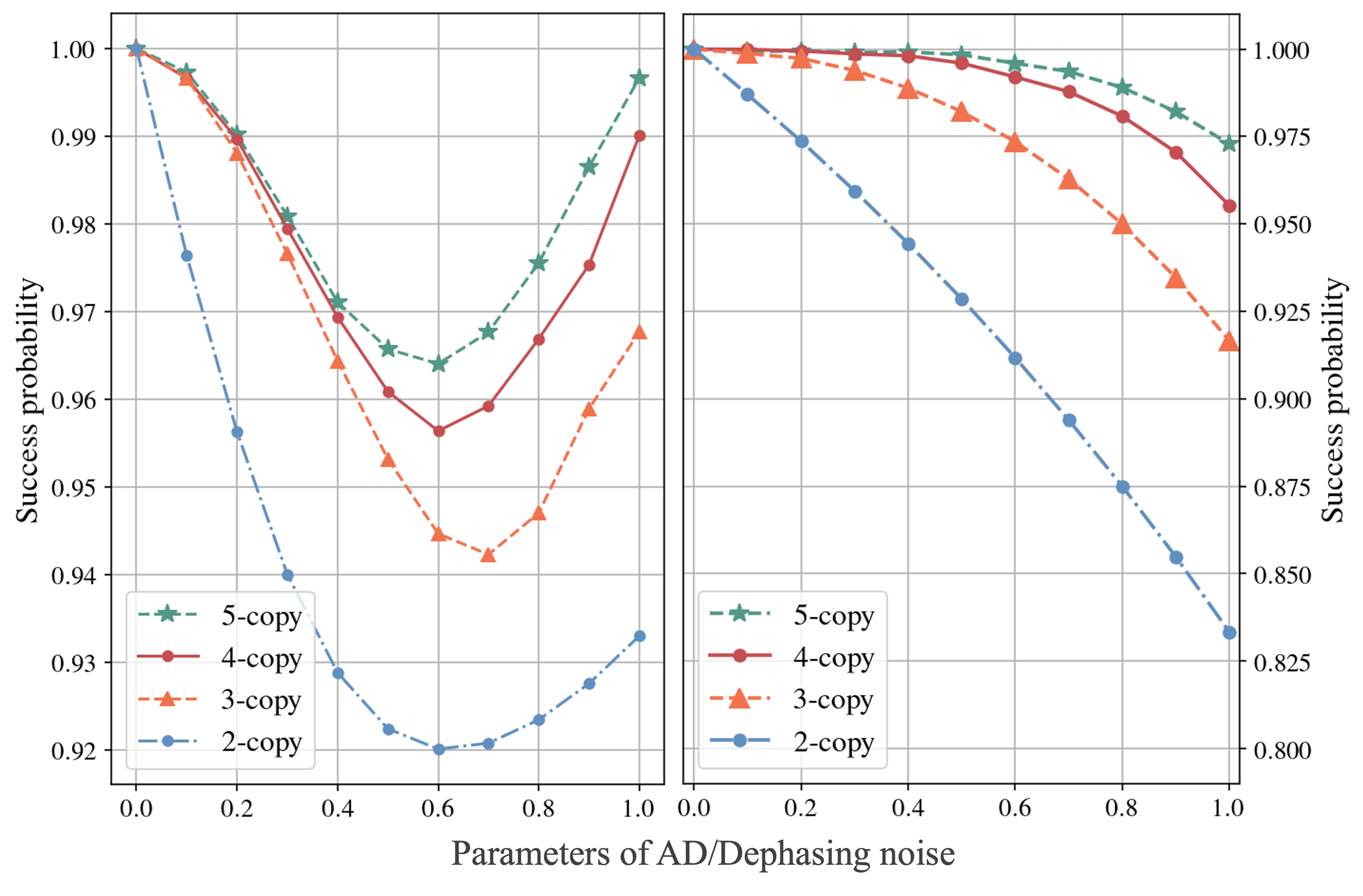}
    \caption{\textbf{Average success probability of distinguishing a Bell state and a noisy Bell state using different copy numbers.} }
    \label{fig:discri}
\end{figure}

\section{Discussion}
In this work, we introduce a new framework DLOCCNet, which is a general and flexible framework for simulating and designing LOCC protocols, and demonstrate its applications in entanglement distillation and distributed state discrimination. For entanglement distillation, our framework enables scalable protocols that demonstrate significant improvements across various noise models: it outperforms existing methods for Bell states affected by erasure channels, achieves fidelity comparable to LOCCNet with substantially faster training for depolarizing channels, and attains higher fidelity than previous protocols in the case of amplitude damping channels. Crucially, DLOCCNet could design an LOCC protocol that involves a large number of systems, which is not possible with the previous framework. In distributed state discrimination, we study optimized circuits that maximize success probabilities across diverse scenarios. The result demonstrates a substantial improvement in success probability as the number of state copies increases. Compared to previous work, DLOCCNet offers superior scalability and faster training, advancing both the practical utility and theoretical understanding of LOCC protocols. 

The advantage of our framework can be understood as the trade-off between expressibility and trainability. While larger PQCs can undoubtedly express more operations by covering additional systems, they become exponentially more difficult to train. This training difficulty arises primarily from the barren plateau~\cite{mcclean2018barren}, which refers to the phenomenon where the gradient drops exponentially with respect to the size of the quantum systems. In our framework, we solve large problems by decomposing it into recursively trainable small PQCs. Although this approach limits expressibility, the training complexity increases polynomially rather than exponentially, making it possible to solve large-sized problems. In addition, since the PQCs used in DLOCCNet are small, our method naturally mitigates the detrimental effects of barren plateaus. \ud{In this work, the processes of PQCs and cost estimation are all simulated on a classical computer. Here, we emphasize that such processes can also be implemented on a quantum device.}

Due to the practicality of the designed LOCC protocol, one interesting further research is to implement the designed LOCC protocols on quantum devices to check the performance. Note that the framework only focuses on the bipartite case, another interesting research topic in the future is to design the multipartite LOCC protocols.

\textbf{Acknowledgements}--- 
\ud{We would like to thank Ranyiliu Chen and Mingrui Jing for their helpful suggestions.} This work was partially supported by the National Key R\&D Program of China (Grant No.~2024YFB4504004), the National Natural Science Foundation of China (Grant. No.~12447107), the Guangdong Provincial Quantum Science Strategic Initiative (Grant No.~GDZX2403008, GDZX2403001), the Guangdong Provincial Key Lab of Integrated Communication, Sensing and Computation for Ubiquitous Internet of Things (Grant No.~2023B1212010007), the Quantum Science Center of Guangdong-Hong Kong-Macao Greater Bay Area, and the Education Bureau of Guangzhou Municipality.
\bibliography{ref.bib}

\clearpage

\vspace{2cm}
\onecolumngrid
\vspace{2cm}
\begin{center}
{\textbf{\large Supplementary Materials of \\
\vspace{10pt}
\centering Dynamic LOCC Circuits for Automated Entanglement Manipulation}}
\end{center}
\renewcommand{\theequation}{S\arabic{equation}}
\renewcommand{\theproposition}{S\arabic{proposition}}
\renewcommand{\thedefinition}{S\arabic{definition}}
\renewcommand{\thefigure}{S\arabic{figure}}
\setcounter{equation}{0}
\setcounter{table}{0}
\setcounter{section}{0}
\setcounter{figure}{0}

\appendix


\section{Analysis of S states distillation}\label{app:s}
\subsection{Dynamic DEJMPS distillation protocol for S state}

\begin{lemma}\label{lem:de-s}
Let $\rho_i\in\{\rho_{\rm S}^{f_i}\}_i$ be an S state defined as
\begin{align}
  \rho_i=(2f_i-1)\Phi^++2(1-f_i)\ketbra{00}{00},\label{eq:s}
\end{align}
where $\Phi^+$ is the maximally entangled state and $f_i=\frac{1+\gamma_i}{2}$ with $\gamma_i$ being the parameter characterizing the S state. Consider the input state given by 
\begin{align}
  \sigma_{\rm in}=\rho_j\otimes\rho_{k}
\end{align}
where $\rho_{j},\rho_k\in\{\rho_{\rm S}^{\gamma_i}\}_i$. Then the fidelity of distilled state by DEJMPS with the maximally entangled state $\Phi^+$ is 
\begin{align}
    f=\frac{f_jf_k}{1-f_j-f_k+2f_jf_k},
\end{align}
where $f_j=\frac{1+\gamma_j}{2}, f_k=\frac{1+\gamma_k}{2}$ with
$\gamma_j, \gamma_k$ being the parameters characterizing $\rho_j$ and $\rho_k$, respectively.  
\end{lemma}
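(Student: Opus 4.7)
The plan is to compute the DEJMPS output fidelity by tracking the circuit's action on $\rho_j \otimes \rho_k$ in the Bell basis and using the fact that only Bell components whose target pair lands in the parity-$0$ subspace after the bilocal CNOT can survive the post-selection.

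First I would reduce to a Bell-diagonal input via the $X \otimes X$ twirl, a free LOCC operation that fixes $\ketbra{\Phi^+}{\Phi^+}$ and sends $\ketbra{00}{00}$ to $\tfrac{1}{2}(\ketbra{00}{00}+\ketbra{11}{11}) = \tfrac{1}{2}(\ketbra{\Phi^+}{\Phi^+}+\ketbra{\Phi^-}{\Phi^-})$. With $f_i = (1+\gamma_i)/2$, this twirls the S state $\rho_i$ to the Bell-diagonal form $\tilde\rho_i = f_i\ketbra{\Phi^+}{\Phi^+}+(1-f_i)\ketbra{\Phi^-}{\Phi^-}$. The off-diagonal Bell components discarded by the twirl contribute zero to the $\ket{\Phi^+}$-fidelity of the DEJMPS output by the same parity argument used below, so it suffices to analyse $\tilde\rho_j \otimes \tilde\rho_k$.

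Next I would push this input through the DEJMPS circuit. The local rotations $R_x(\pi/2)_A \otimes R_x(-\pi/2)_B$ act on the Bell basis by $\ket{\Phi^+}\mapsto \ket{\Phi^+}$ and $\ket{\Phi^-}\mapsto i\ket{\Psi^-}$, giving an effective per-copy state $f_i\ketbra{\Phi^+}{\Phi^+}+(1-f_i)\ketbra{\Psi^-}{\Psi^-}$. The bilocal CNOT then transforms a tensor product of two Bell states by the symplectic rule $\ket{B_{x_1,z_1}}\ket{B_{x_2,z_2}}\mapsto\ket{B_{x_1,z_1\oplus z_2}}\ket{B_{x_1\oplus x_2,z_2}}$, producing the four transitions
\[
\Phi^+\Phi^+\to\Phi^+\Phi^+,\ \ \Phi^+\Psi^-\to\Phi^-\Psi^-,\ \ \Psi^-\Phi^+\to\Psi^-\Psi^-,\ \ \Psi^-\Psi^-\to\Psi^+\Phi^-.
\]
DEJMPS then accepts exactly when the computational-basis measurement on the target pair gives matching outcomes, i.e., projects onto $\mathrm{span}(\ket{00},\ket{11})$. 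Only $\ket{\Phi^\pm}$ have support in this subspace, so only the first term (output $\ket{\Phi^+}$, weight $f_jf_k$) and the fourth term (output $\ket{\Psi^+}$, weight $(1-f_j)(1-f_k)$) survive, each with acceptance probability one.

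Collecting the accepted contributions gives the unnormalized output
\[
\tilde\rho_{\rm out} = f_jf_k\,\ketbra{\Phi^+}{\Phi^+} + (1-f_j)(1-f_k)\,\ketbra{\Psi^+}{\Psi^+},
\]
whose trace equals the success probability $f_jf_k+(1-f_j)(1-f_k)=1-f_j-f_k+2f_jf_k$. Dividing the $\ket{\Phi^+}$-weight by this trace yields exactly $F=f_jf_k/(1-f_j-f_k+2f_jf_k)$, as claimed. The main obstacle is verifying that the off-diagonal Bell components of the original S state can indeed be neglected: the cleanest way is to note that for any ket-bra Bell operator $\ketbra{B_{\rm ket}}{B_{\rm bra}}$ appearing on the target pair after the bilocal CNOT, its matrix element between $\ket{m}$ and $\ket{m}$ with $m\in\{00,11\}$ is nonzero only when $B_{\rm ket}=B_{\rm bra}$ and both have parity $0$ — which forces the input Bell components to be diagonal and reduces everything to the Bell-diagonal calculation above.
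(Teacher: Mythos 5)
Your proof reaches the correct fidelity and is essentially sound, but it takes a genuinely different route from the paper. The paper's proof is a direct computation: it writes down the full post-selected output density matrix of the DEJMPS circuit acting on $\rho_j\otimes\rho_k$, \emph{including} the $\ketbra{\Phi^+}{\Psi^+}$ coherences generated by the $\ketbra{00}{00}$ parts of the S states, and reads off the $\Phi^+$ diagonal entry as the fidelity. You instead work in the Bell/stabilizer picture: reduce each copy to the Bell-diagonal form $f_i\ketbra{\Phi^+}{\Phi^+}+(1-f_i)\ketbra{\Phi^-}{\Phi^-}$, propagate through the local rotations and the bilocal CNOT via the symplectic rule, and post-select on the parity-$0$ target subspace. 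Your route makes it transparent why the answer is $f_jf_k/(f_jf_k+(1-f_j)(1-f_k))$ and scales better to repeated analysis; the paper's route retains the exact output state, which it actually needs downstream (the proof of Proposition S2 tracks precisely those coherence terms through subsequent rounds), whereas your Bell-diagonal reduction discards information that the lemma does not need but the later argument does.

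Two repairs are needed in your justification for dropping the off-diagonal input components. First, the closing ``parity argument'' is stated too strongly: $\braket{00}{\Phi^+}\braket{\Phi^-}{00}=\tfrac12\neq 0$, so an off-diagonal target operator $\ketbra{B_{\rm ket}}{B_{\rm bra}}$ does \emph{not} have vanishing matrix elements for each accepted outcome individually. What is true is that the coincidence sum $\sum_{m\in\{00,11\}}\braket{m}{B_{\rm ket}}\braket{B_{\rm bra}}{m}=\bra{B_{\rm bra}}\Pi\ket{B_{\rm ket}}$, with $\Pi$ the projector onto ${\rm span}\{\ket{00},\ket{11}\}$, vanishes unless $B_{\rm ket}=B_{\rm bra}\in\{\Phi^+,\Phi^-\}$; your argument must invoke the sum over both accepted outcomes, not each one separately. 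This is not cosmetic: the paper's explicit output state (which effectively conditions on the single outcome $00$) retains exactly the $\ketbra{\Phi^+}{\Psi^+}$ coherences your argument claims cannot appear; they happen to be traceless and orthogonal to $\ketbra{\Phi^+}{\Phi^+}$, so the fidelity formula is unaffected under either acceptance convention, but your write-up should say which convention it uses and why the surviving coherences (if any) contribute to neither the numerator nor the normalization. Second, a typo in your CNOT table: $\Psi^-\Phi^+\mapsto\Psi^-\Psi^+$, not $\Psi^-\Psi^-$; harmless, since either target is rejected by the parity filter.
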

\begin{proof}
 Given an input state $\sigma_{\rm in}=\rho_j\otimes \rho_k$ with $\rho_j=(2f_j-1)\Phi^++2(1-f_j)\ketbra{00}{00}$, and $\rho_k=(2f_k-1)\Phi^++2(1-f_k)\ketbra{00}{00}$, it is easy to obtain the distilled state after DEJMPS protocol is 
 \begin{align}
     \sigma_{\rm out}=a\Phi^++(1-a)\Psi^{+}+\frac{a-1}{2}[(\ket{00}+\ket{11})(\bra{01}+\bra{10})+(\ket{01}+\ket{10})(\bra{00}+\bra{11})],\label{eq:output-s-1-de}
 \end{align}
 where 
 \begin{align}
     \Phi^+&=\frac{1}{2}(\ketbra{00}{00}+\ketbra{00}{11}+\ketbra{11}{00}+\ketbra{11}{11}),\\
     \Psi^+&=\frac{1}{2}(\ketbra{01}{01}+\ketbra{01}{10}+\ketbra{10}{01}+\ketbra{10}{10}),\\
     a&=\frac{f_jf_k}{1-f_j-f_k+2f_jf_k},\\
     f_j&=\frac{1+\gamma_j}{2},\;f_k=\frac{1+\gamma_k}{2}
 \end{align}
 with $\gamma_j, \gamma_k$ being the parameters characterizing $\rho_j$ and $\rho_k$, respectively. Therefore, the fidelity of the distilled state by DEJMPS with $\Phi^+$ is 
 \begin{align}
     f=a=\frac{f_jf_k}{1-f_j-f_k+2f_jf_k}.
 \end{align}
\end{proof}

\begin{proposition}\label{prop:s-de}
 Let $\rho\in\{\rho_{\rm S}^{\gamma_i}\}_i$ be an S state defined as
  \begin{align}
      \rho=(2f_1-1)\Phi^++2(1-f_1)\ketbra{00}{00},\label{eq:s}
  \end{align}
  where $\Phi^+$ is the maximally entangled state and $f_1=\frac{1+\gamma}{2}$ with $\gamma$ being the parameter characterizing the S state. Consider the dynamic DEJMPS protocol as depicted in Fig.~\ref{fig:s-n} $(a)$, the fidelity of the output state is 
$$
f_n^{\rm ddejmps}=
\begin{cases}
    \frac{f_1f_{n-1}}{1-f_1+(2f_1-1)f_{n-1}}, &n=3k-1,\\
    \frac{f_1f_{n-1}}{f_{n-1}+f_1-1},&n=3k,\\
    \frac{f_{n-1}}{2f_{n-1}-1},&n=3k+1,\label{eq:s-dyn-de}
\end{cases}
$$
for $k=1,2,\cdots$, where $f_1=\frac{1+\gamma}{2}$, and $n$ is the copy number of $S$ state.     
\end{proposition}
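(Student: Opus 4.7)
The plan is to prove Proposition~\ref{prop:s-de} by strong induction on $n$, tracking the explicit density matrix produced after each round of the dynamic DEJMPS protocol as shown in Fig.~\ref{fig:s-n}$(a)$. The base case $n = 2$ is immediate from Lemma~\ref{lem:de-s}: setting $f_j = f_k = f_1$ gives $f_2^{\rm ddejmps} = \frac{f_1^2}{1 - 2f_1 + 2f_1^2}$, which coincides with the $k = 1$ instance of the first branch of Eq.~\eqref{eq:s-dyn-de}.

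The central difficulty is that the output state in Eq.~\eqref{eq:output-s-1-de} is no longer an S state, so Lemma~\ref{lem:de-s} cannot be reapplied verbatim at the next iteration. The key structural claim I would establish is a three-step periodic cycle: there exist three explicit families $\mathcal{F}_0, \mathcal{F}_1, \mathcal{F}_2$ of two-qubit density operators, each parametrized by a single fidelity parameter to $\Phi^+$, such that one DEJMPS round with a fresh S state of parameter $f_1$ cyclically maps $\mathcal{F}_i$ into $\mathcal{F}_{(i+1) \bmod 3}$. Here $\mathcal{F}_0$ is the S-state family and $\mathcal{F}_1$ is generated by states of the form Eq.~\eqref{eq:output-s-1-de}; the third family $\mathcal{F}_2$ must be identified by an explicit computation. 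Concretely, this requires conjugating by the DEJMPS unitary (bilateral CNOT sandwiched between local $\pm \pi/2$ rotations about the $x$-axis), tracing over the target qubits, post-selecting on the coincident $00$ outcome, and reading off which Bell-basis diagonal entries and which coherences remain populated at each of the three phases.

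Once the families are identified, the inductive step splits into three cases matching the three branches of Eq.~\eqref{eq:s-dyn-de}. For each congruence class of $n$ modulo $3$, the new fidelity is the $\Phi^+$-overlap of the post-selected state normalized by the success probability. The case $n \equiv 2 \pmod 3$ recovers the computation of Lemma~\ref{lem:de-s} essentially verbatim, since the relevant state is again of S-state type. The cases $n \equiv 0$ and $n \equiv 1 \pmod 3$ yield the simpler expressions $\frac{f_1 f_{n-1}}{f_{n-1} + f_1 - 1}$ and $\frac{f_{n-1}}{2f_{n-1} - 1}$; the disappearance of $f_1$ from the last formula provides a strong internal consistency check on the algebraic cancellations, and it reflects the fact that at that particular phase of the cycle the fresh S state enters only through the normalization.

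The main obstacle is the combinatorial bookkeeping of which matrix entries populate $\mathcal{F}_0, \mathcal{F}_1, \mathcal{F}_2$ and the verification that these three families are closed under the ``DEJMPS-plus-fresh-S-state'' map, rather than spreading into a larger subspace of operators. Once closure and the three transformation laws are verified for a single full cycle, the induction closes immediately and Eq.~\eqref{eq:s-dyn-de} follows for all $n \geq 2$.
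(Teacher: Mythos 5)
Your proposal follows essentially the same route as the paper's proof: the paper establishes exactly the three-family period-three cycle you describe, with your $\mathcal{F}_1$ being its set $S_1$ of states of the form Eq.~\eqref{eq:output-s-1-de} and your unidentified $\mathcal{F}_2$ being its set $S_2$ (states supported on $\Phi^+$, $\Psi^-$, and imaginary off-diagonal coherences), after which one DEJMPS round returns to the S-state family and the three fidelity branches follow phase by phase as you outline. Your remaining task --- explicitly computing $\mathcal{F}_2$ and verifying closure of the cycle --- is precisely the computation the paper carries out, so the plan is correct and matches the paper's argument.
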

\begin{proof}
    Before proving the fidelity being Eq.~\eqref{eq:s-dyn-de}, we first show that the output state after three-round dynamic DEJMPS still belongs to the set of S states. 

For any $\sigma^{(1)}_{\rm in}=\rho_j\otimes\rho_k,\;\forall\rho_j,\rho_k\in\{\rho_{\rm S}^{f_i}\}_i$, the output state after DEJMPS is $\sigma^{(1)}_{\rm out}=a\Phi^++(1-a)\Psi^{+}+\frac{a-1}{2}[(\ket{00}+\ket{11})(\bra{01}+\bra{10})+(\ket{01}+\ket{10})(\bra{00}+\bra{11})]$ in Eq.~\eqref{eq:output-s-1-de} based on the proof of Lemma~\ref{lem:de-s}. Denote
\begin{align}
    S_1:=\{\eta_i|\eta_i=a_i\Phi^++(1-a_i)\Psi^{+}+\frac{a_i-1}{2}[(\ket{00}+\ket{11})(\bra{01}+\bra{10})+(\ket{01}+\ket{10})(\bra{00}+\bra{11})]\},
\end{align}
then for any $\sigma^{(2)}_{\rm in}=\eta_j\otimes\rho_j,\;\forall\eta_j\in S_1,\rho_j\in\{\rho_{\rm S}^{f_i}\}_i$, we will obtain the output state 
\begin{align}
    \sigma^{(2)}_{\rm out}=b\Phi^{+}+(1-b)\Psi^-+\frac{(1-b)\mathrm{i}}{2}[(\ket{00}+\ket{11})(\bra{01}-\bra{10})+(\ket{10}-\ket{01})(\bra{00}+\bra{11})]
\end{align}
after DEJMPS protocol, where 
 \begin{align}
     \Phi^+&=\frac{1}{2}(\ketbra{00}{00}+\ketbra{00}{11}+\ketbra{11}{00}+\ketbra{11}{11}),\\
     \Psi^-&=\frac{1}{2}(\ketbra{01}{01}-\ketbra{01}{10}-\ketbra{10}{01}+\ketbra{10}{10}),\\
     b&=\frac{a_jf_j}{-1+a_j+f_j},\\
     f_j&=\frac{1+\gamma_j}{2}
 \end{align}
 with $\gamma_j$ being the parameters characterizing $\rho_j$. Therefore, the fidelity of the distilled state by the secound round of DEJMPS with $\Phi^+$ is 
 \begin{align}
     f^{(2)}=b=\frac{a_jf_j}{-1+a_j+f_j}.
 \end{align}
 Similarly, denote
 \begin{align}
    S_2:=\{\zeta_j|\zeta_j=b_j\Phi^++(1-b_j)\Psi^-+\frac{(1-b_j)\mathrm{i}}{2}[(\ket{00}+\ket{11})(\bra{01}-\bra{10})+(\ket{10}-\ket{01})(\bra{00}+\bra{11})]\},
\end{align}
then after the third round of DEJMPS protocol, the output state will be
\begin{align}
    \sigma^{(3)}_{\rm out}=c\Phi^++(1-c)\ketbra{00}{00}
\end{align}
for any input state $\sigma^{(3)}_{\rm in}=\zeta_m\otimes\rho_m,\;\forall\zeta_m\in S_2,\rho_j\in\{\rho_{\rm S}^{f_i}\}_i$, where $c=\frac{1}{2b_m-1}$, and the fidelity after the third round is
\begin{align}
    f^{(3)}=\frac{b_m}{2b_m-1}.
\end{align}
Note that the output state, $\sigma^{(3)}_{\rm out}$, of the third round belongs to the S state family, which implies that the dynamic DEJMPS process repeats every three iterations as a cycle.

Then, for $n=3k-1$, there is $f_n^{\rm ddejmps}=\frac{f_1f_{n-1}}{1-f_1+(2f_1-1)f_{n-1}}$ according to Lemma~\ref{lem:de-s}. When $n=3k$, we have 
\begin{align}
    f_n^{\rm ddejmps}&=\frac{a_{n-1}f_1}{-1+a_{n-1}+f_1}\\
    &=\frac{f_1f_{n-1}}{f_{n-1}+f_1-1}
\end{align}
because of $a_{n-1}=f_{n-1}$. And
\begin{align}
    f_n^{\rm ddejmps}&=\frac{b_{n-1}}{2b_{n-1}-1}\\
    &=\frac{f_{n-1}}{2f_{n-1}-1}
\end{align}

\end{proof}

\subsection{Dynamic LOCCNet distillation protocol for S state}
\begin{lemma}\label{lem:rho-s}
  Let $\rho_i\in\{\rho_{\rm S}^{\gamma_i}\}_i$ be an S state defined as
  \begin{align}
      \rho_i=(2f_i-1)\Phi^++2(1-f_i)\ketbra{00}{00},\label{eq:s}
  \end{align}
  where $\Phi^+$ is the maximally entangled state and $f_i=\frac{1+\gamma_i}{2}$ with $\gamma_i$ being the parameter characterizing the S state. Consider the input state given by 
  \begin{align}
      \eta_{\rm in}=\sigma_j\otimes\rho_{\star}
  \end{align}
  where $\rho_{\star}\in\{\rho_{\rm S}^{\gamma_i}\}_i$ and 
  \begin{align}
      \sigma_j\in\{\sigma_k\}_k:=\left\{\sigma_k=\Phi^++a_k\left(\ketbra{00}{11}+\ketbra{11}{00}\right)\Big|a_k\in[-1,0]
      \right\}.
  \end{align}
Then the output state $\eta_{\rm out}$ through the circuit in Fig.~\ref{fig:lem-s} still belongs to $\{\sigma_k\}_k$. Furthermore, the fidelity of $\eta_{\rm out}$ with the maximally entangled state $\Phi^+$ is 
  \begin{align}
      f_{\eta}=\frac{(1+a_j)(1+\gamma_{\star})}{(1+\gamma_{\star}+2a_j\gamma_{\star})},
  \end{align}
  where $\gamma_{\star}$ is the parameter of $\rho_{\star}$.
\end{lemma}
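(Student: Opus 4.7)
The plan is to compute the output $\eta_{\rm out}$ explicitly by tracking the input $\eta_{\rm in}=\sigma_j\,\ox\,\rho_\star$ through the local unitaries and the post-selected measurement prescribed by the circuit in Fig.~\ref{fig:lem-s}, and then identify the normalized result with a state in the family $\{\sigma_k\}_k$. The key structural simplification is that both $\sigma_j$ and $\rho_\star$ are supported on $\mathrm{span}\{\ket{00},\ket{11}\}$ of their respective Alice-Bob register pair: writing
\begin{align*}
\rho_\star&=(1-\gamma_\star/2)\ketbra{00}{00}+(\gamma_\star/2)\bigl(\ketbra{11}{11}+\ketbra{00}{11}+\ketbra{11}{00}\bigr),\\
\sigma_j&=\tfrac{1}{2}\bigl(\ketbra{00}{00}+\ketbra{11}{11}\bigr)+\bigl(\tfrac{1}{2}+a_j\bigr)\bigl(\ketbra{00}{11}+\ketbra{11}{00}\bigr),
\end{align*}
places $\eta_{\rm in}$ inside the $4$-dimensional subspace $\mathrm{span}\{\ket{00,00},\ket{00,11},\ket{11,00},\ket{11,11}\}$ and exhibits its matrix entries as low-degree polynomials in $a_j$ and $\gamma_\star$.

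Concretely I would $(1)$ write $\eta_{\rm in}$ as a $4\times 4$ matrix in the indicated basis; $(2)$ conjugate by the local unitaries of Fig.~\ref{fig:lem-s}, which factorize between Alice and Bob and can therefore be tracked entry-by-entry; $(3)$ apply the projector corresponding to the successful measurement outcome and trace out the measured qubits; $(4)$ divide by the success probability; and $(5)$ compare the resulting $2\times 2$ density matrix against the ansatz $\sigma_{k^\star}=\tfrac{1}{2}(\ketbra{00}{00}+\ketbra{11}{11})+(\tfrac{1}{2}+a_{k^\star})(\ketbra{00}{11}+\ketbra{11}{00})$. Once closure into the family is established, the one-line identity $\langle\Phi^+|\sigma_k|\Phi^+\rangle=1+a_k$ (immediate from $\Phi^+=\tfrac{1}{2}(\ket{00}+\ket{11})(\bra{00}+\bra{11})$) reduces the lemma to the scalar equation $1+a_{\rm out}=\frac{(1+a_j)(1+\gamma_\star)}{1+\gamma_\star+2a_j\gamma_\star}$, equivalently $a_{\rm out}=\frac{a_j(1-\gamma_\star)}{1+\gamma_\star+2a_j\gamma_\star}$.

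The main obstacle is the closure step. If the circuit of Fig.~\ref{fig:lem-s} contains $R_y$ rotations that mix $\ket{0}$ and $\ket{1}$, then during the intermediate steps the state temporarily leaves $\mathrm{span}\{\ket{00},\ket{11}\}$, and one must verify that the post-selection projects back into the required family with \emph{equal} diagonal weights $1/2$; otherwise the output would only lie in a larger class of $X$-states rather than in $\{\sigma_k\}_k$. The specific rotation angles produced by DLOCCNet are precisely what enforce this balance, and checking that identity is the crucial structural step; the remainder of the argument reduces to a routine calculation of two scalar quantities, namely $a_{\rm out}$ and the success probability. Two useful cross-checks will be worth carrying out: at $\gamma_\star=1$ (noiseless $\rho_\star$) the formula must give $a_{\rm out}=0$, so $\eta_{\rm out}=\Phi^+$, consistent with distilling against a perfect ebit returning the other input; and one should verify $a_{\rm out}\in[-1,0]$ for all $(a_j,\gamma_\star)\in[-1,0]\times[0,1]$, which closes the family and legitimates the induction used in the subsequent proposition on $f_n^{\rm dloccnet}$.
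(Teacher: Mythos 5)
Your proposal follows essentially the same route as the paper's proof: a direct computation of the output of the circuit on $\sigma_j\otimes\rho_\star$, giving $\eta_{\rm out}=\Phi^++b\left(\ketbra{00}{11}+\ketbra{11}{00}\right)$ with $b=\frac{a_j(1-\gamma_\star)}{1+\gamma_\star+2a_j\gamma_\star}$ — exactly your $a_{\rm out}$ — after which the fidelity follows from $\braandket{\Phi^+}{\sigma_k}{\Phi^+}=1+a_k$. The paper is no more explicit than you are about the intermediate circuit algebra (it simply asserts the output state), so your plan, including the closure check that $a_{\rm out}\in[-1,0]$, is a faithful and if anything slightly more careful version of the same argument.
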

\begin{proof}
     Given an input state $ \eta_{\rm in}=\sigma_j\otimes \rho_{\star}$ with $\sigma_j=\Phi^++a_j\left(\ketbra{00}{11}+\ketbra{11}{00}\right)$, $\rho_{\star}=(2f_{\star}-1)\Phi^++2(1-f_{\star})\ketbra{00}{00}$, it is easy to obtain the output state after circuit in Fig.~\ref{fig:lem-s} is 
    \begin{align}
        \eta_{\rm out}&=\frac{1}{2}\left(\ketbra{00}{00}+\ketbra{11}{11}\right)+\left(b+\frac{1}{2}\right)\left(\ketbra{00}{11}+\ketbra{11}{00}\right)\\
        &=\Phi^++b\left(\ketbra{00}{11}+\ketbra{11}{00}\right),
    \end{align}
    with $b=\frac{a_j(1-\gamma_{\star})}{1+\gamma_{\star}+2a_j\gamma_{\star}}$,
    which satisfies $\eta_{\rm out}\in\{\sigma_k\}_k$. Moreover, the fidelity is $ f_{\eta}=\frac{(1+a_j)(1+\gamma_{\star})}{(1+\gamma_{\star}+2a_j\gamma_{\star})}$.
\end{proof}

\begin{figure}[h]
    \centering
    \includegraphics[width=0.4\linewidth]{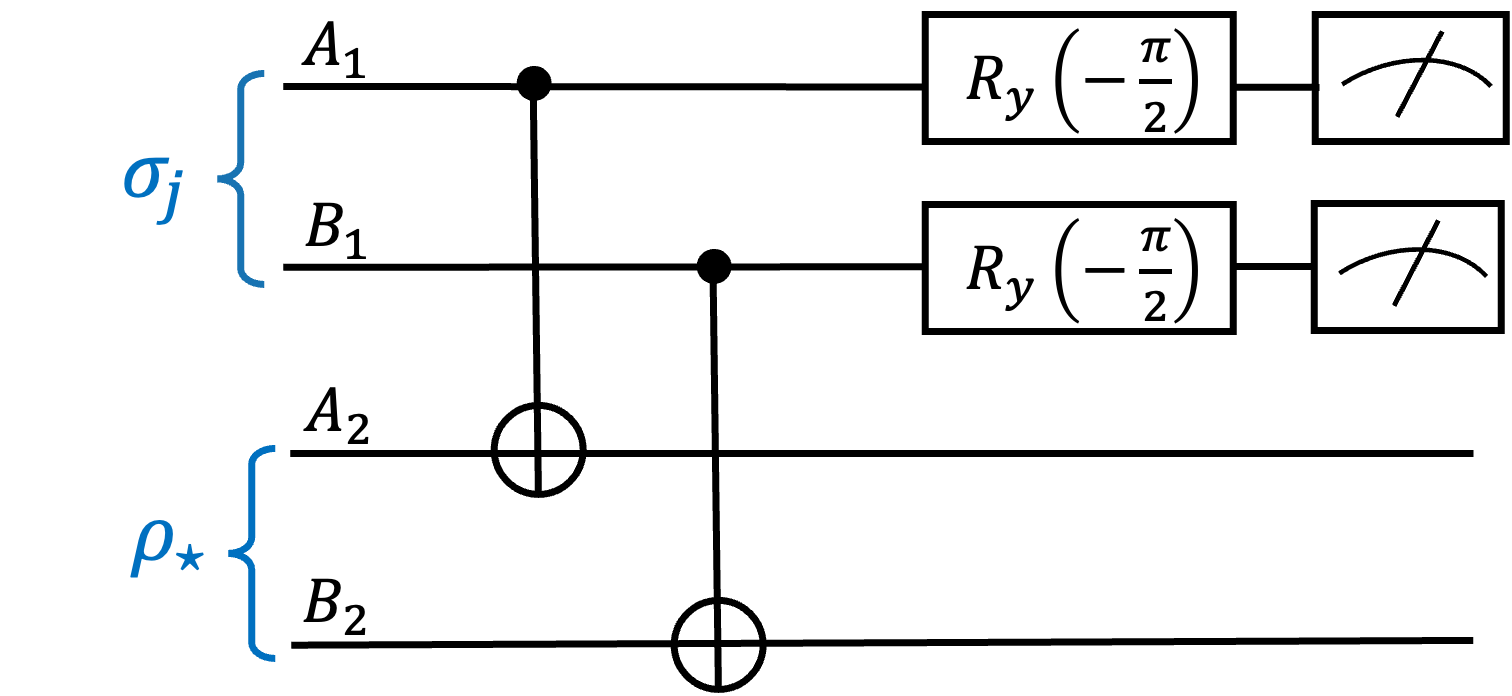}
    \caption{S state distillation protocol learned by dynamic LOCCNet.}
    \label{fig:lem-s}
\end{figure}

\begin{figure}[h]
    \centering
    \includegraphics[width=0.35\linewidth]{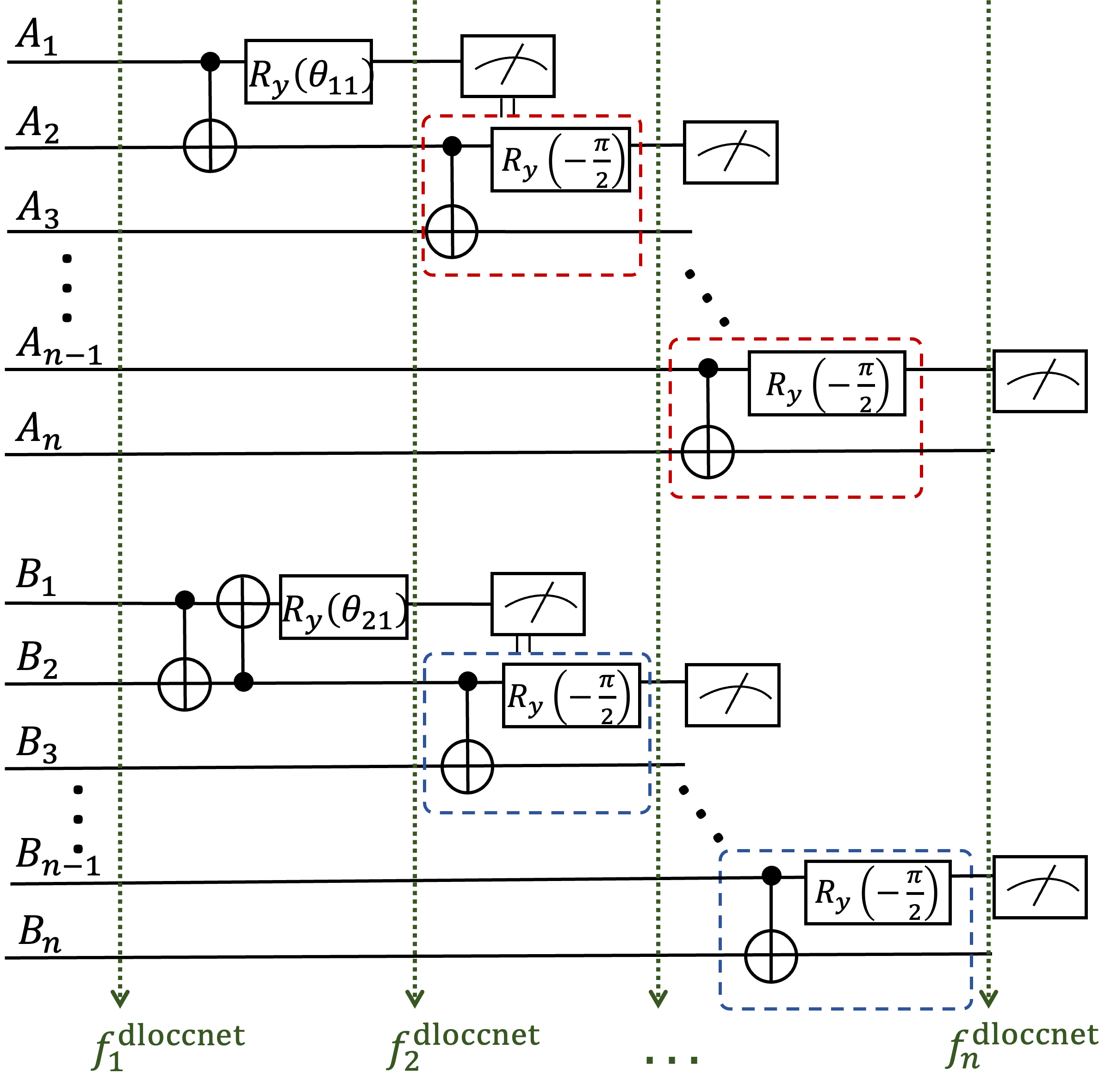}
    \caption{\textbf{Dynamic LOCCNet distillation protocol for S states.} The rotation angles of $R_y$ gates are $\theta_{11}=\theta_{21}=\arccos{(1-\gamma)+\pi}$.}
    \label{fig:propdyn-s}
\end{figure}

\begin{proposition}\label{prop:s-dyn}
    Let $\rho\in\{\rho_{\rm S}^{\gamma_i}\}_i$ be an S state defined as
  \begin{align}
      \rho=(2f_1-1)\Phi^++2(1-f_1)\ketbra{00}{00},\label{eq:s}
  \end{align}
  where $\Phi^+$ is the maximally entangled state and $f_1=\frac{1+\gamma}{2}$ with $\gamma$ being the parameter characterizing the S state. Consider the distillation protocol learned by dynamic LOCCNet as depicted in Fig.~\ref{fig:propdyn-s}, the fidelity of the output state is given by
\begin{equation}
    f_n^{\rm dloccnet}=\frac{f_1f_{n-1}}{1-f_1+(2f_1-1)f_{n-1}}, n\geq 3,
\end{equation}
where $n$ is the copy number of S state, $f_1=\frac{1+\gamma}{2}$, and 
\begin{equation}
    f_2^{\rm dloccnet}=\frac{1}{2}\left(1+\sqrt{-\gamma(\gamma-2)}\right).
\end{equation}

\end{proposition}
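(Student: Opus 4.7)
The plan is to induct on the copy number $n$, using Lemma~\ref{lem:rho-s} as the workhorse for the inductive step and treating $n=2$ as a separate base case. This split is dictated by the structure of the protocol in Fig.~\ref{fig:propdyn-s}: the first round uses the nonstandard angles $\theta_{11}=\theta_{21}=\arccos(1-\gamma)+\pi$, while all subsequent rounds use $-\pi/2$ and thereby coincide with the circuit of Fig.~\ref{fig:lem-s}, which is exactly the setting of Lemma~\ref{lem:rho-s}.

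For the base case $n=2$, I would carry out a direct calculation on $\rho\otimes\rho$. Using $\cos\theta_{11}=-(1-\gamma)=\gamma-1$ and $\sin\theta_{11}=-\sqrt{1-(1-\gamma)^2}=-\sqrt{\gamma(2-\gamma)}$, apply $R_y(\theta_{11})\otimes R_y(\theta_{21})$ on Alice's and Bob's measured qubits, follow with the two CNOTs indicated in Fig.~\ref{fig:propdyn-s}, and post-select on the success branch where both ancilla measurements return $0$. After normalization one finds that the output state has the form $\Phi^+ + a_2(\ketbra{00}{11}+\ketbra{11}{00})$, so it already lies inside the family $\{\sigma_k\}_k$ introduced in Lemma~\ref{lem:rho-s}. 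Computing $\langle\Phi^+|\sigma_2|\Phi^+\rangle$ is immediate once one uses $\Phi^+=\tfrac12(\ketbra{00}{00}+\ketbra{00}{11}+\ketbra{11}{00}+\ketbra{11}{11})$, and it yields $f_2^{\rm dloccnet}=1+a_2=\tfrac12(1+\sqrt{\gamma(2-\gamma)})$, which matches the claim since $\sqrt{-\gamma(\gamma-2)}=\sqrt{\gamma(2-\gamma)}$. The fact that $\sigma_2\in\{\sigma_k\}_k$ is what unlocks the induction.

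For the inductive step $n\geq 3$, suppose the state at the end of round $n-1$ has the form $\sigma_{n-1}=\Phi^++a_{n-1}(\ketbra{00}{11}+\ketbra{11}{00})\in\{\sigma_k\}_k$, and observe that its fidelity with $\Phi^+$ is $f_{n-1}=1+a_{n-1}$ by a one-line overlap calculation; equivalently $a_{n-1}=f_{n-1}-1$. Feeding $\sigma_{n-1}\otimes\rho$ into the next round, which is exactly the circuit of Lemma~\ref{lem:rho-s} with $\rho_\star=\rho$, the lemma guarantees that the output is again in $\{\sigma_k\}_k$ (preserving the induction hypothesis) and has fidelity
\begin{equation}
f_n=\frac{(1+a_{n-1})(1+\gamma)}{1+\gamma+2a_{n-1}\gamma}.
\end{equation}
Substituting $a_{n-1}=f_{n-1}-1$ together with the identifications $1+\gamma=2f_1$, $\gamma=2f_1-1$, $1-\gamma=2(1-f_1)$ coming from $f_1=(1+\gamma)/2$, and simplifying, the denominator becomes $2[(1-f_1)+(2f_1-1)f_{n-1}]$ and the numerator $2 f_1 f_{n-1}$, giving the advertised recurrence $f_n^{\rm dloccnet}=\frac{f_1 f_{n-1}}{1-f_1+(2f_1-1)f_{n-1}}$.

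The main obstacle is the base case: the first round uses a $\gamma$-dependent rotation, so the algebra does not factor through Lemma~\ref{lem:rho-s}, and one has to track the unnormalized post-measurement branch explicitly and verify that the resulting state falls into the Lemma's invariant family $\{\sigma_k\}_k$. Once that verification is in hand, the remainder of the proof is a mechanical application of Lemma~\ref{lem:rho-s} together with the fidelity identification $f_j=1+a_j$.
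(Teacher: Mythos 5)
Your proposal is correct and follows essentially the same route as the paper: a direct computation of the first ($\gamma$-dependent) round to show the output lands in the invariant family $\{\sigma_k\}_k$ with $f_2=\tfrac12\bigl(1+\sqrt{\gamma(2-\gamma)}\bigr)$, followed by repeated application of Lemma~\ref{lem:rho-s} with the identification $a_{n-1}=f_{n-1}-1$ and the substitutions $1+\gamma=2f_1$, $\gamma=2f_1-1$ to obtain the stated recurrence. The only difference is presentational — you phrase the repeated application as an explicit induction, whereas the paper computes the $n=3$ step and concludes "similarly" — and your algebra checks out against the paper's.
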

\begin{proof}
   Given an S state $\rho=\gamma\Phi^++(1-\gamma)\ketbra{00}{00}$, it is obvious that its fidelity to the maximally entangled state is $f_1=\frac{1+\gamma}{2}$. Then we will see that the output state after the first round is
   \begin{align}
       \sigma_{\rm out_1}=\Phi^++a_1(\ketbra{00}{11}+\ketbra{11}{00})
   \end{align}
   with $a_1=\frac{1}{2}\left(\sqrt{-\gamma(\gamma-1)}-1\right)\in[-1,0]$, corresponding to the fidelity 
   \begin{align}
     f_2^{\rm dloccnet}=\frac{1}{2}\left(1+\sqrt{-\gamma(\gamma-2)}\right). 
   \end{align}
   Note that $a_1=f_2^{\rm dloccnet}-1$, and the input state of the next round of distillation is $\sigma_{\rm out_1}\otimes\rho_S$, then we will find that the fidelity of using 3-copy of S state is 
   \begin{align}
       f_3^{\rm dloccnet}&=\frac{(1+a_1)(1+\gamma)}{1+\gamma+2a_1\gamma}\\
       &=\frac{f_1f_{2}}{1-f_1+(2f_1-1)f_{2}},
   \end{align}
   where the first equality is maintained according to the Lemma~\ref{lem:rho-s}, and the second holds due to $a_1=f_2^{\rm dloccnet}-1$ and $f_1=\frac{1+\gamma}{2}$. Similarly, when using $n$ copies of S state, the fidelity of the output state is 
\begin{equation}
    f_n^{\rm dloccnet}=\frac{f_1f_{n-1}}{1-f_1+(2f_1-1)f_{n-1}}, n\geq 3.
\end{equation}
\end{proof}

\section{Analysis of isotropic states distillation}\label{app:iso}
\subsection{Iteration method for isotropic states distillation}

\begin{lemma}\label{lem:rhoin-ori}
   Let $\rho_i\in\{\rho_{\rm iso}^{p_i}\}_i$ be an isotropic state defined as
\begin{align}
    \rho_i=f_i\Phi^++\frac{1-f_i}{3}(I-\Phi^+),
\end{align}
where $\Phi^+$ is the maximally entangled state and $f_i=\frac{1+3p_i}{4}$ with $p_i$ being the parameter characterizing the isotropic state. Consider the input state given by
\begin{align}
    \sigma_{\rm in}=\rho_j\otimes \rho_k^{\otimes 3}
\end{align}
where $\rho_j,\rho_k\in\{\rho_{\rm iso}^{p_i}\}_i$. Then the distilled state $\sigma_{\rm out}$ obtained via distillation protocol learned by LOCCNet remains an isotropic state, i.e.,
\begin{align}
    \sigma_{\rm out}\in \{\rho_{\rm iso}^{p_i}\}_i.
\end{align}
\end{lemma}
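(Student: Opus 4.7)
The plan is to exploit the defining symmetry of the isotropic family. A two-qubit state $\rho$ is isotropic if and only if it is invariant under the bilocal twirl
\begin{equation}
T(\rho) = \int dU \, (U\otimes U^*)\,\rho\,(U\otimes U^*)^\dagger,
\end{equation}
where the integral is over the Haar measure; equivalently, $\{\rho_{\rm iso}^{p_i}\}_i$ is precisely the fixed-point set of $T$. Since every factor of $\sigma_{\rm in} = \rho_j\otimes\rho_k^{\otimes 3}$ is isotropic, one has $T^{\otimes 4}(\sigma_{\rm in}) = \sigma_{\rm in}$, and the task reduces to showing that the LOCCNet distillation channel $\Lambda$ is equivariant with respect to this twirl, i.e.\ that $T\bigl(\Lambda(\sigma_{\rm in})\bigr) = \Lambda\bigl(T^{\otimes 4}(\sigma_{\rm in})\bigr) = \Lambda(\sigma_{\rm in})$.

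First I would inspect the circuit from Zhao et al.\ and identify its building blocks: local Clifford gates (CNOTs and single-qubit Cliffords) together with computational-basis measurements and classical postselection on prescribed outcomes. Clifford operations preserve Bell diagonality of tensor products of Bell-diagonal inputs, and the postselected Kraus operators can be grouped so that the overall action commutes with conjugation by $U\otimes U^*$ on the output register, once the protocol's Pauli corrections after measurement are accounted for. Combined with $T^{\otimes 4}(\sigma_{\rm in}) = \sigma_{\rm in}$, this yields $T(\sigma_{\rm out}) = \sigma_{\rm out}$, which is exactly the claim.

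As a sanity check, and as the natural fallback route, I would also carry out a direct Bell-basis computation: expand $\sigma_{\rm in}$ as a convex combination of four-fold tensor products of Bell states, propagate each branch through the Cliffords and measurement postselection, and verify that the three non-$\Phi^+$ Bell coefficients of $\sigma_{\rm out}$ come out equal, so that $\sigma_{\rm out} = f\Phi^+ + \frac{1-f}{3}(I-\Phi^+)$ for some $f$.

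The main obstacle will be justifying the twirl equivariance rigorously, since the LOCCNet circuit is obtained by variational optimization and is not manifestly symmetric under the bilocal twirl. The cleanest route is to exploit the $S_3$ exchange symmetry of the input $\rho_k^{\otimes 3}$: symmetrizing the protocol over permutations of these three copies leaves its action on $\sigma_{\rm in}$ unchanged and should make the desired equivariance transparent. If that symmetrization conflicts with the circuit's measurement feedforward, one falls back to the direct Bell-basis computation, which is routine but tedious because it must be carried out branch-by-branch over all accepted measurement outcomes.
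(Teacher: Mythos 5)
Your primary route --- characterizing $\{\rho_{\rm iso}^{p_i}\}_i$ as the fixed-point set of the bilocal twirl $T$ and deducing $T(\sigma_{\rm out})=\sigma_{\rm out}$ from an equivariance property of the distillation channel --- has a genuine gap that your own proposed repair does not close. The identity $T\circ\Lambda=\Lambda\circ T^{\otimes 4}$ is a $U\otimes U^*$-covariance property of the specific postselected LOCC channel $\Lambda$; it holds for textbook protocols like BBPSSW only because those protocols explicitly twirl each pair before and after the bilateral CNOT step, and there is no reason for it to hold as an exact operator identity for a variationally learned circuit. The $S_3$ symmetrization over the three $\rho_k$ factors is orthogonal to this issue: permutation symmetry among copies says nothing about conjugation by $U\otimes U^*$ on the surviving output pair, so it cannot ``make the equivariance transparent.'' Establishing covariance rigorously would be at least as much work as the direct computation, and is quite possibly false as stated, so the twirl argument cannot stand as the proof.

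Your fallback, however, is exactly what the paper does. The paper's proof is a direct computation: propagate $\sigma_{\rm in}=\rho_j\otimes\rho_k^{\otimes 3}$ through the fixed circuit $U_A,U_B$ of Fig.~\ref{fig:iso-16}$(a)$ with the prescribed postselection and read off that the output is
\begin{align}
\sigma_{\rm out}=f_{\sigma}\Phi^++\frac{1-f_{\sigma}}{3}\left(I-\Phi^+\right),\qquad
f_{\sigma}=\frac{1-(2+3f_j)f_k+(1+12f_j)f_k^2}{3+f_j(1-4f_k)^2-3f_k},
\end{align}
i.e.\ Bell diagonal with the three non-$\Phi^+$ weights equal, which is precisely the criterion you state. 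Note that the explicit formula for $f_{\sigma}$ is not a mere by-product: it is what feeds the recurrences in Propositions~\ref{prop:ori-itr} and~\ref{prop:dyn-itr}, so the branch-by-branch computation is the substance of the result, not a sanity check. If you promote the Bell-basis computation from fallback to main argument (and verify, rather than assume, that the learned gates are bilateral Cliffords so that Bell diagonality is preserved through measurement and postselection), your proof coincides with the paper's.
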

\begin{proof}
    Given an input state $ \sigma_{\rm in}=\rho_j\otimes \rho_k^{\otimes 3}$ with $\rho_j=f_j\Phi^++\frac{1-f_j}{3}(I-\Phi^+), \rho_k=f_k\Phi^++\frac{1-f_k}{3}(I-\Phi^+)$, it is easy to obtain the distilled state using distillation protocol $U_A,U_B$ in Fig.~\ref{fig:iso-16}$(a)$ is 
    \begin{align}
        \sigma_{\rm out}=f_{\sigma}\Phi^++\frac{1-f_{\sigma}}{3}(I-\Phi^+),
    \end{align}
    with 
    \begin{align}
       f_{\sigma}=\frac{1-(2+3f_j)f_k+(1+12f_j)f_k^2}{3+f_j(1-4f_k)^2-3f_k},
    \end{align}
    which satisfies $\sigma_{\rm out}\in \{\rho_{\rm iso}^{p_i}\}_i$.
\end{proof}

\begin{proposition}\label{prop:ori-itr}
Let $\rho_0\in\{\rho_{\rm iso}^{p_i}\}_i$ be an isotropic state defined as
\begin{align}
    \rho_0=f_0\Phi^++\frac{1-f_0}{3}(I-\Phi^+),
\end{align}
where $\Phi^+$ is the maximally entangled state and $f_0=\frac{1+3p_0}{4}$ with $p_0$ being the parameter characterizing the isotropic state. Consider the distillation protocol $U_A, U_B$ learned by LOCCNet as depicted in Fig.~\ref{fig:iso-16} $(a)$, the distilled state remains within the set of isotropic states after $n$ iterations of the distillation process. The fidelity of the output state after $n$ iterations, denoted as $f_{n}^{\rm ori}$, is given by the recurrence relation
\begin{align}
 f_n^{\rm ori}=\frac{1-4f_{n-1}+6f_{n-1}^2}{3-8f_{n-1}+8f_{n-1}^2},\; n\geq 1,
\end{align}
where $f_{n-1}$ is the fidelity after $n-1$ iterations.
\end{proposition}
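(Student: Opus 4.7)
The plan is to proceed by induction on the iteration count $n$, with inductive hypothesis that the state after $n$ iterations is again isotropic with fidelity $f_n^{\rm ori}$ satisfying the stated recurrence. The base case $n = 0$ is immediate: $\rho_0$ is isotropic by assumption, with fidelity $f_0 = (1 + 3 p_0)/4$.

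For the inductive step, I would use the fact that the protocol in Fig.~\ref{fig:iso-16}$(a)$ consumes four copies of the current state at each round. Assuming the state after $n-1$ iterations is isotropic with fidelity $f_{n-1}$, the input to round $n$ is $\rho_{n-1}^{\otimes 4}$, which is exactly the hypothesis of Lemma~\ref{lem:rhoin-ori} specialized to $\rho_j = \rho_k = \rho_{n-1}$, i.e.\ $f_j = f_k = f_{n-1}$. Lemma~\ref{lem:rhoin-ori} then delivers simultaneously that the output is again isotropic (preserving the invariant) and the closed-form fidelity expression
\begin{equation*}
f_n^{\rm ori} \;=\; \frac{1 - (2 + 3 f_{n-1}) f_{n-1} + (1 + 12 f_{n-1}) f_{n-1}^{2}}{3 + f_{n-1}(1 - 4 f_{n-1})^{2} - 3 f_{n-1}}.
\end{equation*}

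It remains to simplify this expression to the target form. Expanding the numerator and denominator yields the cubics $12 f_{n-1}^{3} - 2 f_{n-1}^{2} - 2 f_{n-1} + 1$ and $16 f_{n-1}^{3} - 8 f_{n-1}^{2} - 2 f_{n-1} + 3$. Both vanish at $f_{n-1} = -1/2$, so they share a common linear factor $(2 f_{n-1} + 1)$; polynomial division produces the factorizations $(2 f_{n-1} + 1)(6 f_{n-1}^{2} - 4 f_{n-1} + 1)$ and $(2 f_{n-1} + 1)(8 f_{n-1}^{2} - 8 f_{n-1} + 3)$. Cancelling the common factor leaves exactly $(1 - 4 f_{n-1} + 6 f_{n-1}^{2})/(3 - 8 f_{n-1} + 8 f_{n-1}^{2})$, closing the induction.

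The proof is thus almost entirely a direct inductive application of Lemma~\ref{lem:rhoin-ori}. The only non-routine moment is spotting the common factor $(2 f_{n-1} + 1)$ in the unsimplified expression; this is the main (very mild) obstacle, but it is quickly located by testing small rational roots of the two cubics, after which the remaining polynomial arithmetic is mechanical.
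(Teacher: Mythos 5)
Your proposal is correct and follows essentially the same route as the paper: an inductive application of Lemma~\ref{lem:rhoin-ori} with $\rho_j=\rho_k=\rho_{n-1}$ (so $f_j=f_k=f_{n-1}$), using the lemma both to preserve the isotropic form and to read off the fidelity. The only difference is that you explicitly carry out the algebraic simplification via the common factor $(2f_{n-1}+1)$, which the paper's proof states without showing; your factorization checks out.
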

\begin{proof}
    Given an input state $\sigma_{\rm in}=\rho_0^{\otimes 4}$ with $\rho_0\in\{\rho_{\rm iso}^{p_i}\}_i$, we find that the output state $\sigma_{\rm out}$ after the distillation protocol $U_A,U_B$ in Fig.~\ref{fig:iso-16} $(a)$ still belongs to the set $\rho_0\in\{\rho_{\rm iso}^{p_i}\}_i$ according to Lemma~\ref{lem:rhoin-ori}. Moreover, the fidelity of $\sigma_{\rm out}$ with a maximally entangled state is 
    \begin{align}
        f_1^{\rm ori}=\frac{1-4f_0+6f_0^2}{3-8f_0+8f_0^2},
    \end{align}
    where $f_0=\frac{1+3p_0}{4}$ with $\rho_0=p_0\Phi^++\frac{1-p_0}{4}I$. Since the input state of the $n$-th iteration is the output state of the $(n-1)$-th round, which remains an isotropic state, denoted as $\sigma_{\rm out}^{n-1}=f_{n-1}\Phi+\frac{1-f_{n-1}}{3}(I-\Phi^+)$, the fidelity of the output state after $n$ iterations is therefore
    \begin{align}
 f_n=\frac{1-4f_{n-1}+6f_{n-1}^2}{3-8f_{n-1}+8f_{n-1}^2},\; n\geq 1.
\end{align}
\end{proof}

\subsection{Dynamic method for isotropic states distillation}
\begin{proposition}\label{prop:dyn-itr}
Let $\rho_0\in\{\rho_{\rm iso}^{p_i}\}_i$ be an isotropic state defined as
\begin{align}
    \rho_0=f_0\Phi^++\frac{1-f_0}{3}(I-\Phi^+),
\end{align}
where $\Phi^+$ is the maximally entangled state and $f_0=\frac{1+3p_0}{4}$ with $p_0$ being the parameter characterizing the isotropic state.Consider the distillation protocol $U_A, U_B$ learned by LOCCNet as depicted in Fig.~\ref{fig:iso-16} $(a)$, the distilled state remains within the set of isotropic states after $n$ iterations using the method of dynamic LOCCNet. The fidelity of the output state after $n$ iterations, denoted as $f_{n}^{\rm dyn}$, is given by the recurrence relation
\begin{align}
  f_n^{\rm dyn}=\frac{1-2f_0+f_0^2-3f_0f_{n-1}+12f_0^2f_{n-1}}{3-3f_0+f_{n-1}-8f_0f_{n-1}+16f_0^2f_{n-1}},\; n\geq 2,
\end{align}
where $f_{n-1}$ is the fidelity after $n-1$ iterations with $f_1=\frac{1-2p_0+9p_0^2}{4-8p_0+12p_0^2}$.
\end{proposition}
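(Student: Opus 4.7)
The plan is to proceed by induction on the iteration count $n$, using Lemma~\ref{lem:rhoin-ori} as the main engine. The crucial structural observation to exploit is that the dynamic protocol, as depicted in Fig.~\ref{fig:iso-16}$(b)$, consumes one previously distilled state together with three fresh copies of $\rho_0$ per round, so the round-$n$ input has exactly the form $\rho_j \otimes \rho_k^{\otimes 3}$ required by Lemma~\ref{lem:rhoin-ori}, with $\rho_j$ the output of round $n-1$ and $\rho_k = \rho_0$. In particular, the lemma already guarantees that the output of each round is again isotropic, so the isotropic structure is automatically preserved along the whole iteration and only the fidelity parameter evolves.

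For the base case I would handle $n=1$ separately, since the first round starts with $\rho_0^{\otimes 4}$ rather than one distilled state plus three fresh copies. This input coincides with the first step of the iterative method analyzed in Proposition~\ref{prop:ori-itr}, so I would read off
\begin{align*}
  f_1^{\rm dyn} \;=\; f_1^{\rm ori} \;=\; \frac{1 - 4 f_0 + 6 f_0^2}{3 - 8 f_0 + 8 f_0^2},
\end{align*}
and then substitute $f_0 = (1+3p_0)/4$ to confirm that this simplifies to the claimed $f_1 = (1-2p_0+9p_0^2)/(4 - 8p_0 + 12 p_0^2)$. Equivalently, one may apply the fidelity expression derived in the proof of Lemma~\ref{lem:rhoin-ori} with $f_j = f_k = f_0$ and obtain the same value; the two routes serve as a useful cross-check.

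For the inductive step $n \geq 2$, assume $\sigma_{\rm out}^{(n-1)}$ is an isotropic state with fidelity $f_{n-1}$. Because the dynamic protocol of Fig.~\ref{fig:iso-16}$(b)$ applies the fixed circuits $U_A,U_B$ to $\sigma_{\rm out}^{(n-1)} \otimes \rho_0^{\otimes 3}$, Lemma~\ref{lem:rhoin-ori} applies with $f_j = f_{n-1}$ and $f_k = f_0$, yielding
\begin{align*}
  f_n^{\rm dyn} \;=\; \frac{1 - (2 + 3 f_{n-1}) f_0 + (1 + 12 f_{n-1}) f_0^2}{3 + f_{n-1}(1 - 4 f_0)^2 - 3 f_0}.
\end{align*}
Expanding $(1-4f_0)^2$ in the denominator and regrouping numerator and denominator by their dependence on $f_{n-1}$ then directly produces the claimed recurrence
\begin{align*}
  f_n^{\rm dyn} \;=\; \frac{1 - 2 f_0 + f_0^2 - 3 f_0 f_{n-1} + 12 f_0^2 f_{n-1}}{3 - 3 f_0 + f_{n-1} - 8 f_0 f_{n-1} + 16 f_0^2 f_{n-1}}.
\end{align*}

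The main obstacle is not conceptual but bookkeeping. One must be careful about the asymmetric roles of the two isotropic states in Lemma~\ref{lem:rhoin-ori} — the singleton $\rho_j$ versus the triple $\rho_k^{\otimes 3}$ — and correctly identify the singleton with the previously distilled state rather than with one of the fresh copies injected at each round. Once this pairing is fixed, the induction is immediate, and verifying that the lemma's fidelity expression expands into the stated recurrence is a routine algebraic simplification.
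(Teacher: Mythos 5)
Your proposal is correct and follows essentially the same route as the paper: both handle the first round (input $\rho_0^{\otimes 4}$) separately to obtain $f_1^{\rm dyn}=\frac{1-4f_0+6f_0^2}{3-8f_0+8f_0^2}=\frac{1-2p_0+9p_0^2}{4-8p_0+12p_0^2}$, and for $n\geq 2$ apply Lemma~\ref{lem:rhoin-ori} to the input $\sigma_{\rm out}^{(n-1)}\otimes\rho_0^{\otimes 3}$ with $f_j=f_{n-1}$, $f_k=f_0$, from which the stated recurrence follows by expanding $(1-4f_0)^2$. Your write-up is in fact slightly more explicit than the paper's about the algebraic substitution and about which tensor factor plays the role of the singleton $\rho_j$ in the lemma.
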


\begin{proof}
    Given an input state $\sigma_{\rm in}=\rho_0'\otimes\rho_0^{\otimes 3}$ with $\rho_0',\rho_0\in\{\rho_{\rm iso}^{p_i}\}_i$, we find that the output state $\sigma_{\rm out}$ after the distillation protocol $U_A,U_B$ in Fig.~\ref{fig:iso-16} $(a)$ still belongs to the set $\rho_0\in\{\rho_{\rm iso}^{p_i}\}_i$ according to Lemma~\ref{lem:rhoin-ori}. Moreover, the fidelity of $\sigma_{\rm out}$ after first iteration using dynamic method in Fig.~\ref{fig:iso-16} $(b)$ is $f_1^{\rm dyn}=\frac{1-4f_0+6f_0^2}{3-8f_0+8f_0^2}$, where $f_0=\frac{1+3p_0}{4}$ with $\rho_0=p_0\Phi^++\frac{1-p_0}{4}I$. Since the input state of the $n$-th iteration is $\sigma_{\rm in}^{n}=\sigma_{\rm out}^{n-1}\otimes\rho_0^{\otimes 3}$, with the $(n-1)$-th round output state $\sigma_{n-1}=f_{n-1}\Phi+\frac{1-f_{n-1}}{3}(I-\Phi^+)$, which remains an isotropic state, the fidelity of the output state after $n$ iterations is therefore  
    \begin{align}
  f_n=\frac{1-2f_0+f_0^2-3f_0f_{n-1}+12f_0^2f_{n-1}}{3-3f_0+f_{n-1}-8f_0f_{n-1}+16f_0^2f_{n-1}},\; n\geq 2,
\end{align}
with $f_1=\frac{1-2p_0+9p_0^2}{4-8p_0+12p_0^2}$.
\end{proof}

\section{Other cases of distillation and discrimination}\label{app:other-distill-discri}
We extend our analysis to the distillation of maximally entangled states in qutrit case that are affected by depolarizing channel. The maximally entangled states in 3-dimension is 
\begin{align}
    \ket{\Psi^+}_{\rm qutrit}=\frac{1}{\sqrt{3}}(\ket{00}+\ket{11}+\ket{22}).
\end{align}
When each qutrit of $\ket{\Psi^+}_{\rm qutrit}$ passes through the depolarizing channel $\cN_{\rm dep}$, the resulting mixed state is
\begin{align}
    \sigma=p\rho+(1-p)\frac{I}{9},
\end{align}
where $\rho=\ket{\Psi^+}_{\rm qutrit}\langle \Psi^{+}|$.

In our analysis, we employ DLOCCNet with 2-in-1-out to distill the noisy state $\sigma$ using three copies. The final fidelity achieved for this scenario is presented in Fig.~\ref{fig:distill-qutrit}. The results demonstrate that DLOCCNet consistently effectively distills the noisy states across various noise parameters, indicating that our framework is also applicable and effective for the high-dimensional system distillation.
\begin{figure}[h]
    \centering
    \includegraphics[width=0.6\linewidth]{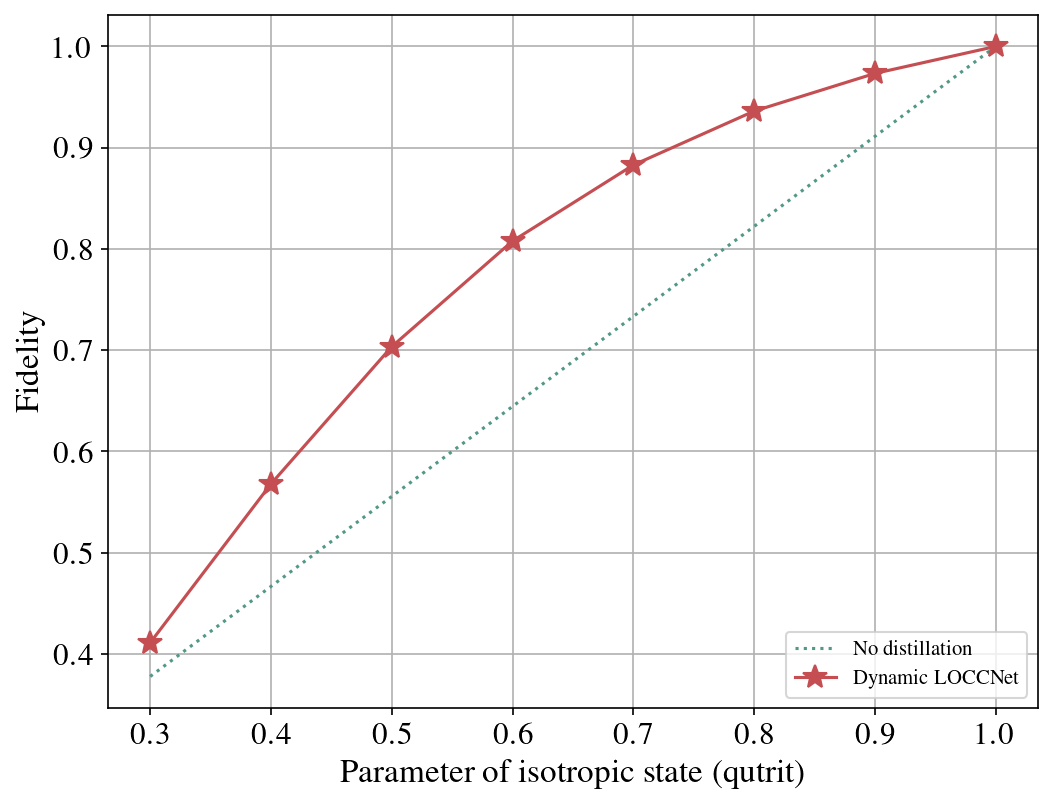}
    \caption{Fidelity achieved by distillation protocols for qutrit case.}
    \label{fig:distill-qutrit}
\end{figure}

\begin{figure}[h]
    \centering
    \includegraphics[width=0.6\linewidth]{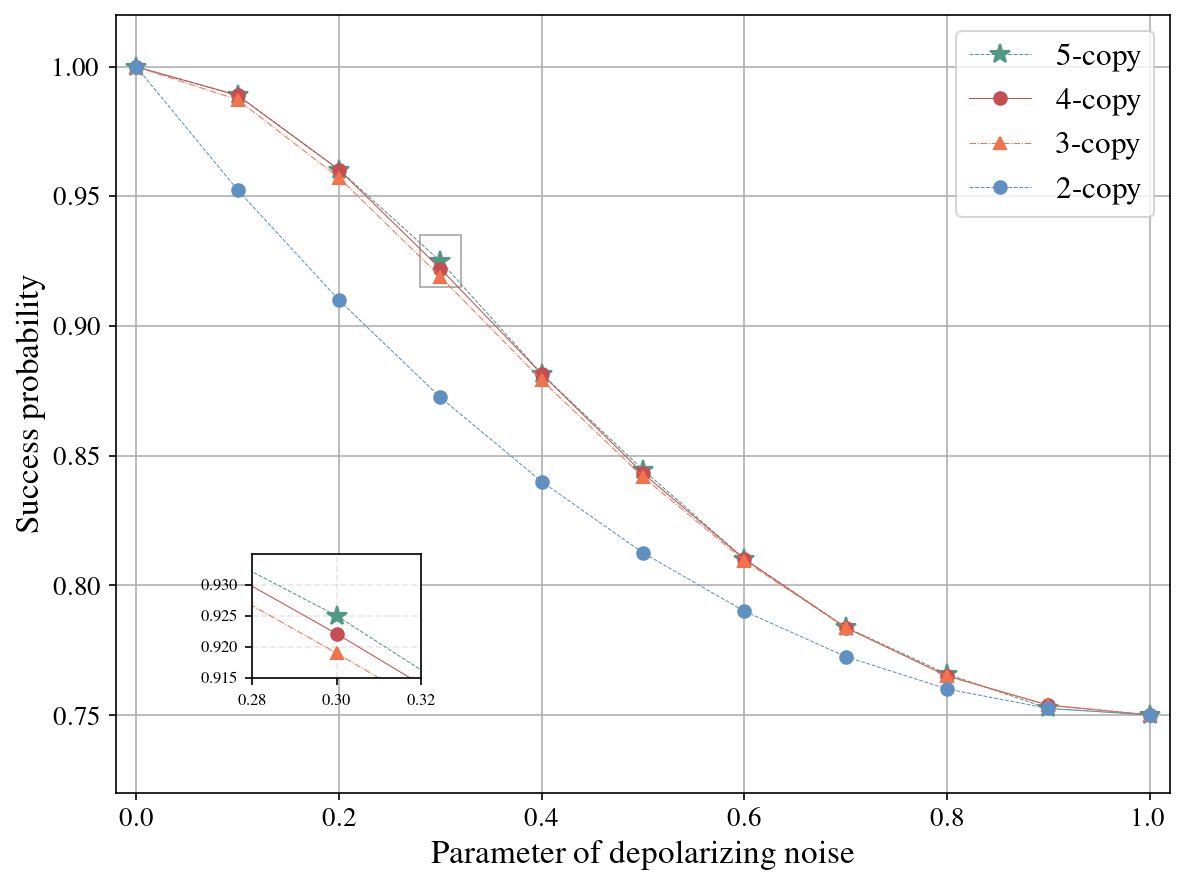}
    \caption{Average success probability of distinguishing a Bell state and a noisy Bell state affected by depolarizing noise using different copy numbers.}
    \label{fig:distill-qutrit}
\end{figure}

\end{document}